\documentclass[lettersize,journal]{IEEEtran}
\usepackage{amsmath,amsfonts}
\usepackage{array}
\usepackage[caption=false,font=normalsize,labelfont=sf,textfont=sf]{subfig}
\usepackage{textcomp}
\usepackage{stfloats}
\usepackage{booktabs}
\usepackage{float}
\usepackage{graphicx}
\usepackage{amsmath}
\usepackage{amsthm}
\usepackage{amssymb}
\usepackage{algorithm}
\usepackage{algpseudocode}
\usepackage{afterpage}
\usepackage{url}
\usepackage{bm}
\usepackage{verbatim}
\usepackage{graphicx}
\usepackage{multirow}
\usepackage{makecell}
\usepackage{array}
\usepackage{stfloats}
\usepackage{booktabs}
\newtheorem{theorem}{Theorem}
\usepackage{color} 
\usepackage{balance}
\usepackage[numbers,sort&compress]{natbib}
\hyphenation{op-tical net-works semi-conduc-tor IEEE-Xplore}
\def\BibTeX{{\rm B\kern-.05em{\sc i\kern-.025em b}\kern-.08em
		T\kern-.1667em\lower.7ex\hbox{E}\kern-.125emX}}
\usepackage{balance}
\usepackage[colorlinks,
linkcolor=black,
anchorcolor=black,
citecolor=black,
urlcolor=blue]{hyperref}


\begin{document}
	
\title{\huge{Pre-Chirp-Domain Index Modulation for Full-Diversity \\Affine Frequency Division Multiplexing towards 6G}}
\author{ 
		{Guangyao Liu, 
		Tianqi~Mao,~\IEEEmembership{Member,~IEEE},
		Zhenyu~Xiao,~\IEEEmembership{Senior Member,~IEEE},
		Miaowen~Wen,~\IEEEmembership{Senior Member,~IEEE},
		Ruiqi Liu,~\IEEEmembership{Senior Member,~IEEE}, 
		Jingjing Zhao,~\IEEEmembership{Member,~IEEE}, 
		Ertugrul Basar,~\IEEEmembership{Fellow,~IEEE}, 
		Zhaocheng Wang,~\IEEEmembership{Fellow,~IEEE},
		Sheng Chen,~\IEEEmembership{Life~Fellow,~IEEE}}
		
  \thanks{This work was supported in part by the National Natural Science Foundation of China (NSFC) under grant numbers U22A2007, 62171010, 62088101 and 62401054, U2233216 and 62471015, in part by the Open Project Program of State Key Laboratory of CNS/ATM under Grant numbers 2024B12, in part by Young Elite Scientists Sponsorship Program by CAST under Grant numbers 2022QNRC001 and in part by the Fundamental Research Funds for the Central Universities under Grant numbers 2024ZYGXZR076. Part of this work has been presented in IEEE IWCMC 2024 \cite{10592488}. \emph{(Corresponding authors: Tianqi Mao, Zhenyu Xiao.)}}
	\thanks{G. Liu, Z. Xiao and J. Zhao are with the School of Electronic and Information Engineering and the State Key Laboratory of CNS/ATM, Beihang University, Beijing 100191, China (e-mails: liugy@buaa.edu.cn, xiaozy@buaa.edu.cn, jingjingzhao@buaa.edu.cn).}
	\thanks{T. Mao is with Greater Bay Area Innovation Research Institute of BIT, Zhuhai 519000, China, and is also with Beijing Institute of Technology (Zhuhai), Zhuhai 519088, China (e-mail: maotq@bit.edu.cn).}
	\thanks{R. Liu is with the Wireless and Computing Research Institute, ZTE	Corporation, Beijing 100029, China, and also with the State Key Laboratory of Mobile Network and Mobile Multimedia Technology, Shenzhen 518055, China (e-mail: richie.leo@zte.com.cn).}
	\thanks{M. Wen is with the School of Electronic and Information Engineering, South China University of Technology, Guangzhou 510640, China (e-mail: eemwwen@scut.edu.cn).}
	\thanks{E. Basar is with the Department of Electrical Engineering, Tampere University, 33720 Tampere, Finland, on leave from the Department of Electrical and Electronics Engineering, Koc University, 34450 Sariyer, Istanbul, Turkey (email: ertugrul.basar@tuni.fi and ebasar@ku.edu.tr).}
	\thanks{Z. Wang is with the Department of Electronic Engineering, Tsinghua University, Beijing 100084, China (e-mail: zcwang@tsinghua.edu.cn).}
	\thanks{S. Chen is with the School of Electronics and Computer Science, University of Southampton, Southampton SO17 1BJ, U.K. (e-mail: sqc@ecs.soton.ac.uk).}
\vspace{-.5cm}
}
	
\maketitle
	
\begin{abstract}
		As a superior multicarrier technique utilizing chirp signals for high-mobility communications, affine frequency division multiplexing (AFDM) is envisioned to be a promising candidate for sixth-generation (6G) wireless networks. AFDM is based on the discrete affine Fourier transform (DAFT) with two adjustable parameters of the chirp signals, termed the pre-chirp and post-chirp parameters, respectively. Whilst the post-chirp parameter complies with stringent constraints to combat the time-frequency doubly selective channel fading, we show that the pre-chirp counterpart can be flexibly manipulated for an additional degree of freedom. Therefore, this paper proposes a novel AFDM scheme with the pre-chirp index modulation (PIM) philosophy (AFDM-PIM), which can implicitly convey extra information bits through dynamic pre-chirp parameter assignment, thus enhancing both spectral and energy efficiency. Specifically, we first demonstrate that the subcarrier orthogonality is still maintained by applying distinct pre-chirp parameters to various subcarriers in the AFDM modulation process. Inspired by this property, we allow each AFDM subcarrier to carry a unique pre-chirp signal according to the incoming bits. By such an arrangement, extra bits can be embedded into the index patterns of pre-chirp parameter assignment without additional energy consumption. We derive asymptotically tight upper bounds on the average bit error probability (BEP) of the proposed schemes with the maximum-likelihood detection, and validate that the proposed AFDM-PIM can achieve full diversity under doubly dispersive channels. Based on the derived result, we further propose an optimal pre-chirp alphabet design to enhance the bit error rate (BER) performance via intelligent optimization algorithms. Simulation results demonstrate that the proposed AFDM-PIM outperforms the classical benchmarks.
\end{abstract}
	
\begin{IEEEkeywords}
	Index modulation (IM), affine frequency division multiplexing (AFDM), discrete affine Fourier transform (DAFT), doubly dispersive channel.
\end{IEEEkeywords}
	
\section{Introduction}\label{S1}

	\IEEEPARstart{T}{he} beyond fifth-generation (B5G) and sixth-generation (6G) wireless networks are envisioned to deliver ultra-reliable, high data rate, and low-latency communications for high--speed mobile scenarios, including low-earth-orbit (LEO) satellite, high-mobility railway, unmanned aerial vehicles (UAV) and vehicle-to-vehicle (V2V) communications \cite{liu2023beginning,10054381,9509294,9689960,Zhaojingjing2025,Mao2025WCL,Wang2023CN}. These scenarios inevitably suffer from severe Doppler shifts, which can cause time-frequency doubly selective fading (i.e., doubly dispersive channel) by involving the multi-path effects\cite{Zhao2025TWC}. This makes the existing modulation formats, like the main-stream orthogonal frequency division multiplexing (OFDM) in 4G/5G standards, no longer suitable for next-generation networks \cite{wu2016survey}. Consequently, it is crucial to develop new waveforms for next-generation communication networks to adapt to the doubly selective channel.
	
	To date, several novel modulation techniques have been designed to combat the time-frequency doubly selective fading, such as orthogonal time-frequency space (OTFS) \cite{7925924,10159363,9508932} and orthogonal chirp-division multiplexing (OCDM) \cite{9346006,10214498,10497118}. OTFS modulates information in the delay-Doppler (DD) domain using the inverse symplectic finite Fourier transform (ISFFT), which enables the transmission symbols to be multiplexed across the entire time-frequency domain \cite{8686339,8727425,9794710}. OCDM utilizes a series of orthogonal chirp signals whose frequency varies with time to modulate information, which achieves better performance than the OFDM technique under doubly dispersive channels. 
	However, the two-dimensional representation of the DD channel in OTFS incurs significant pilot overhead, and the diversity gain that OCDM can obtain depends on specific channel profiles. 
	
	Against this background, affine frequency division multiplexing (AFDM) has been proposed based on the discrete affine Fourier transform (DAFT) \cite{bemani2021afdm}, which can also combat the time-frequency doubly selective fading, and more importantly has less complexity to implement than OTFS since it requires only one-dimension transformation. DAFT is defined as a generalized discrete form of the discrete Fourier transform (DFT) with a chirp-like basis specified by dual adjustable parameters, termed pre-chirp and post-chirp parameters, respectively. In AFDM, data symbols are multiplexed onto chirp-like subcarriers through DAFT and inverse DAFT (IDAFT), which can separate the doubly dispersive channel into a sparse, quasi-static channel with a comprehensive DD channel representation by appropriately setting the chirp parameters. Therefore, AFDM achieves similar performance to OTFS, and demonstrates superior performance over OFDM and OCDM under doubly selective channels \cite{bemani2023affine}, making AFDM a promising waveform for future mobile communication systems \cite{10769778}.
	
	There has been preliminary literature on AFDM \cite{10557524,10566604,9940346,10551402,10580928,10439996}. A low-complexity embedded pilot-aided diagonal reconstruction (EPA-DR) channel estimation scheme was proposed in \cite{10557524}, which calculates the AFDM effective channel matrix directly without estimating the three channel parameters, eliminating the inherently severe inter-Doppler interference. In \cite{10566604}, the authors investigated the AFDM-empowered sparse code multiple access (SCMA) systems to support massive connectivity in high-mobility environments. An AFDM-based integrated sensing and communications (ISAC) system was studied in \cite{9940346}, demonstrating that the AFDM-ISAC system can maintain excellent sensing performance even under significant Doppler shifts. In \cite{10551402}, the authors proposed a bistatic sensing-aided channel estimation (SACE) scheme, which illustrates the capacity of AFDM to attain optimal sensing resolution and communication performance under reasonable parameter selection. A DFT-based modulation and demodulation technique for AFDM was studied in \cite{10580928}, with the potential to facilitate the adoption of AFDM for future deployments. The authors in \cite{10439996} showed that utilizing a single pilot symbol achieves nearly the same sensing performance as employing the complete AFDM frame and proposed a low-complexity self-interference cancellation scheme for monostatic radar. 
	The existing literature mostly explored the channel estimation, multiple access and ISAC issues under the classical AFDM architecture, whilst studies regarding further optimization/enhancement of the AFDM waveform are still at their infancy.

	One promising research direction is to incorporate the index modulation (IM) philosophy for spectral and energy efficiency improvement \cite{mao2018novel,8004416,Mao2024TWC,Liu2024TVT}, which conveys energy-free bits through the activation patterns of transmit entities, e.g., subcarriers \cite{bacsar2013orthogonal}, time slots \cite{9398861}, pulse positions \cite{wen2019survey}, antennas \cite{6823072}, constellation modes \cite{7547943}, etc. IM-aided systems provide unique advantages, including flexible system architectures, high spectral efficiency (SE), and low hardware complexity, by leveraging additional energy-free information bits compared to conventional modulation systems.
	
	In \cite{10570960,24wenJSAC}, Tao \emph{et al.} presented an IM-assisted scheme, which conveys energy-free information bits through the activation patterns of the subsymbols in the DAFT-domain, verifying that index bits have stronger diversity protection than modulation bits. A multicarrier system using the activation patterns of AFDM chirp subcarriers as indices was developed in \cite{10342712}, which indicates the potential of IM-assisted AFDM technology in enhancing bit error rate (BER) and energy efficiency performance. However, existing research has concentrated on post-chirp parameters in AFDM, with little attention paid to the considerable flexibility and degrees of freedom (DoFs) that the pre-chirp parameter offers. Although the methods of index modulation based on the value of the pre-chirp parameter were discussed in \cite{rou2024afdm}, it focused more on reducing the complexity of the demodulation algorithm at the receiver, and no complete design method for the value of different pre-chirp parameters was proposed in this study.
	
	Distinctively, this paper proposes a novel AFDM scheme with the pre-chirp-domain index modulation (AFDM-PIM) to enhance both spectral and energy efficiencies. Furthermore, performance evaluation of the proposed AFDM-PIM structure, including pairwise error probability (PEP) analysis and diversity analysis, is performed, and the numerical selection of the pre-chirp parameters is analyzed and optimized. The main contributions of this work are highlighted as follows:
	
\begin{itemize}
	\item We prove that the subcarrier orthogonality is maintained by applying distinct pre-chirp parameters to different subcarriers during the AFDM modulation process. Based on this property, each AFDM subcarrier is constructed with a unique pre-chirp signal corresponding to the incoming bits. This configuration allows for the embedding of additional bits into the index patterns of pre-chirp parameter assignment without additional energy consumption.
	\item We derive the input-output relationship of the proposed AFDM-PIM scheme in the DAFT domain, and the asymptotically tight upper bounds on the average bit error probability (BEP) with the maximum likelihood (ML) detection based on the PEP analysis. Furthermore, we validate that the proposed AFDM-PIM scheme can achieve full diversity order under doubly dispersive channels. 
	\item We propose an optimal pre-chirp alphabet design to enhance the BER performance via particle swarm optimization (PSO) algorithm. It is verified via extensive simulations that the optimized pre-chirp parameter alphabet results in a much better BER performance than the heuristic selection of pre-chirp parameter values. Our results also demonstrate that the proposed AFDM-PIM scheme is superior to classical AFDM and IM-aided OFDM algorithms in terms of BER performance.
\end{itemize}
	
	The rest of the paper is organized as follows. In Section~\ref{S2}, the AFDM system model is introduced. Section~\ref{S3} details the proposed AFDM-PIM scheme. The performance analysis of AFDM-PIM under doubly dispersive channels is presented in Section~\ref{S4}, which includes the PEP and diversity analysis. The pre-chirp parameter optimization is provided in Section~\ref{S5}. The simulation results and discussions are offered in Section~\ref{S6}, and Section~\ref{S7} draws the conclusions. The notations adopted in this paper are listed in Table~\ref{Notation}.
	
	
\begin{table}[!tbp]
\centering
\caption{Mathematical Notations}
\label{Notation} 
\vspace*{-1mm}
\begin{tabular}{|m{2cm}<{\centering}|m{5.8cm}<{\centering}|}
			\hline
			\textbf{Notation} & \textbf{Description} \\ \hline
			$\left\lfloor \cdot \right\rfloor$ & Integer floor operator. \\ \hline
			$s \sim \mathcal{CN}(0,\sigma^{2})$ & Random variable $s$ follows a complex Gaussian distribution with zero mean and variance $\sigma^{2}$. \\ \hline
			$x^{*}$ & Conjugate of the complex number $x$. \\ \hline
			$a \mid b$ & $b$ is divisible by $a$ without leaving a remainder. \\ \hline
			$\max\left(a,b\right)$ & Maximum value of $a$ and $b$. \\ \hline
			$\min\left(a,b\right)$ & Minimum value of $a$ and $b$. \\ \hline
			$\binom{a}{b}$ & Number of ways to choose $b$ elements from a set of $a$ elements. \\ \hline
			$(\cdot)_{N}$ & Modulo $N$ operation. \\ \hline
			$\mathbf{X}^\mathrm{T}$, $\mathbf{X}^\mathrm{H}$ & Transpose and Hermitian operations of $\mathbf{X}$, respectively. \\ \hline
			$\left\| \mathbf{X} \right\|_{\mathrm{F}}$ & Frobenius norm of $\mathbf{X}$. \\ \hline
			$\mathbf{I}_N$ & $N \times N$ identity matrix. \\ \hline
			$Q(\cdot)$ & Tail distribution function of the standard Gaussian distribution. \\ \hline
			$\mathrm{E}(\cdot)$ & Expectation operator. \\ \hline
			$\mathbb{I}$ & Set of irrational numbers. \\ \hline
			$\Re(x)$ & Real part of the complex number $x$. \\ \hline
			$\textsf{j} = \sqrt{-1}$ & Imaginary unit (imaginary axis). \\ \hline
			$\prod$ & Product operator, representing the multiplication of a sequence of terms. \\ \hline
			$\textnormal{Tr}[\cdot]$ & Trace operator. \\ \hline
\end{tabular}
\end{table}

\section{AFDM System Model}\label{S2}

	The general system model of AFDM is presented in Fig.~\ref{fig1}. For clarity, we provide a concise review of the fundamental concept of AFDM \cite{bemani2023affine}. The transmitted bit stream is initially mapped onto a symbol vector, denoted as $\mathbf{x}_A\! =\! \left[ x_A[0], x_A[1], \ldots, x_A[N-1]\right]^\mathrm{T}\! \in\! \mathbb{C}^{N \times 1}$, comprising $N$ $M$-ary amplitude/phase modulations (APM) symbols in the DAFT domain. The resulting signals are then converted to time-domain representations with an $N$-point IDAFT, formulated as
\begin{equation}\label{eqIDAFT} 
	s_A[n] = \frac{1}{\sqrt{N}} \sum_{m=0}^{N-1} x_A[m] e^{\textsf{j} 2 \pi\left(c_1 n^2 + c_2 m^2 + \frac{m n}{N} \right)},
\end{equation}
where $s_A[n]$ is the time domain signal, and $m,n\! \in\! \{0, 1, \ldots, N-1\}$, while $c_1$ and $c_2$ are the post-chirp and pre-chirp parameters of the DAFT, respectively. 
	
	Similarly to OFDM, AFDM also necessitates the insertion of prefix to address the multi-path problem. By leveraging the inherent periodicity characteristic of DAFT, a chirp-periodic prefix (CPP) is inserted to serve the function analogous to the cyclic prefix (CP) in OFDM, which is given by
\begin{equation}\label{eqCPP} 
	s_A[n]\! =\! s_A[N\! +\! n]e^{-\textsf{j} 2\pi c_{1}(N^{2}+2Nn)},\, n\! =\! -L_{\mathrm{cp}},\ldots,-1,\!
\end{equation} 
where $L_{\mathrm{cp}}$ is the length of the CPP.

\begin{figure}[!t] 
	\centering
	\includegraphics[width=3.4in]{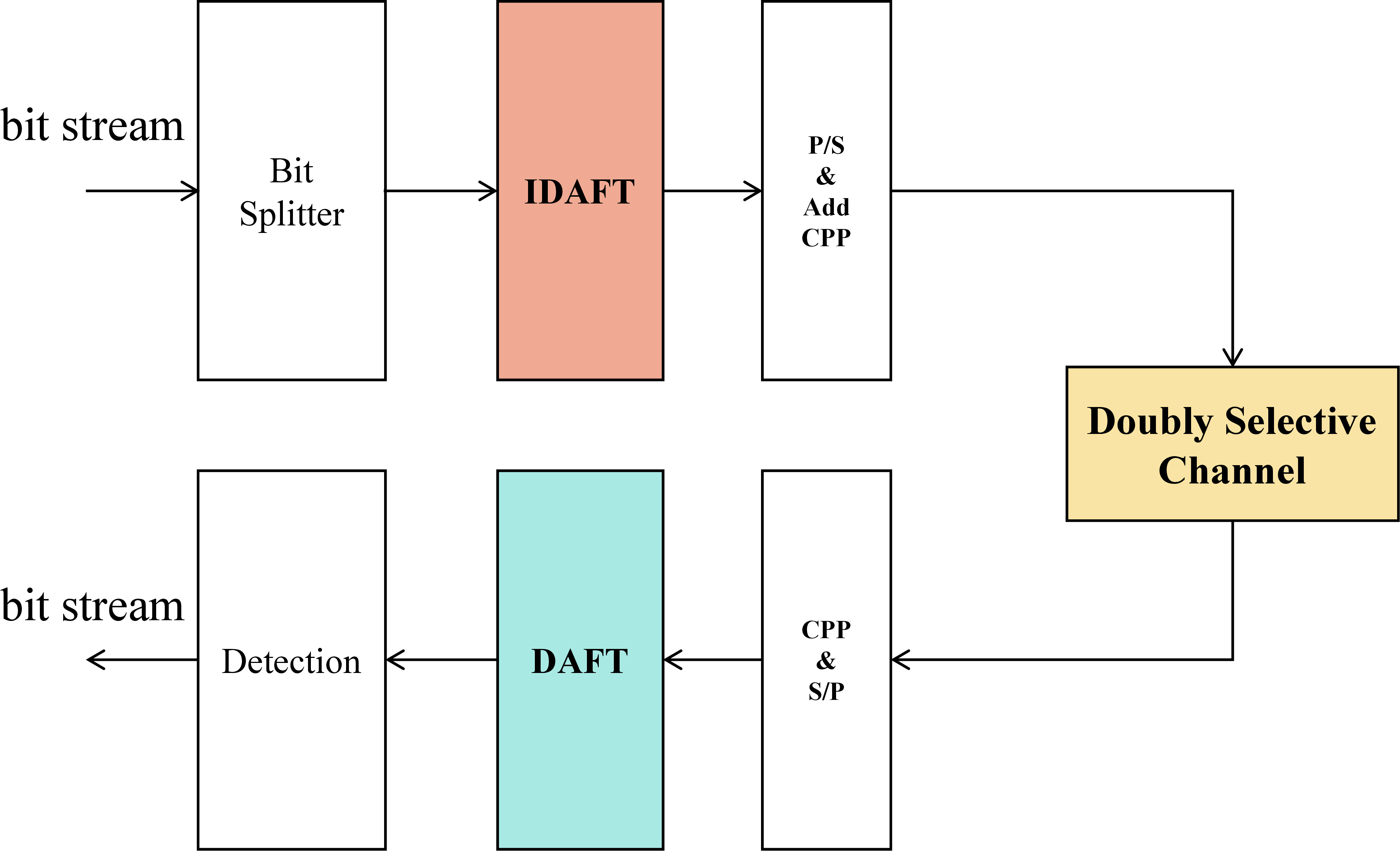}
	\vspace{-2mm}
	\caption{The block diagram of AFDM system.}
	\label{fig1}
\vspace{-2mm}
\end{figure}

	Under high-mobility scenarios, the transmitted signals may experience time-frequency doubly dispersive channel attributed to the severe Doppler shift and multi-path effects, which can be modeled as
\begin{equation}\label{eqCIR}  
	h(\tau, \nu) = \sum_{p=1}^P h_p \delta(\tau-\tau_p) \delta(\nu-\nu_p),
\end{equation}
where $P$ is the number of the paths, $\nu_p$ and $\tau_p$ are the Doppler shift and delay of the $p$-th path, respectively, while $h_p\! \sim\! \mathcal{CN}(0,1/P)$ is the $p$-th path's channel coefficient. The normalized delay and Doppler shift are given by $d_p\! =\! \tau_p \Delta f$ and $\alpha_{p}\! =\! N T \nu_{p}$, respectively, where $\Delta f$ is the AFDM subcarrier spacing and $T$ is the sampling interval with $T \Delta f\! =\! 1$. Furthermore, $\alpha_{p}\! \in\! [-\alpha_{\max},\, \alpha_{\max}]$ and $d_p\! \in\! [0,\, d_{\max}] $, where $\alpha_{\max}$ and $d_{\max}$ denote the maximum Doppler shift and maximum delay, respectively \cite{10570960}. For simplicity and without loss of generality, we mainly consider integer values of $\alpha_p$ in this paper. 

	At the receiver, by discarding the CPP, the received time domain symbols can be expressed as
\begin{equation}\label{time-signals} 
	r_A[n] = \sum_{p=1}^{P} h_p s_A[n - d_p] e^{-\textsf{j} 2\pi \nu_p n} + w[{n}] ,
\end{equation}
where $w[{n}]\! \sim\! \mathcal{C N}\left(0, N_0\right)$ is the additive white Gaussian noise (AWGN). After the $N$-point DAFT, the received AFDM signals in the DAFT domain can be expressed as 
\begin{align}\label{eqRxS} 
	y_A[\bar{m}] = \frac{1}{\sqrt{N}}&\sum_{n=0}^{N-1} r_A[n] e^{-\textsf{j} 2 \pi\left(c_1 n^2+c_2 \bar{m}^2+n \bar{m} / N\right)} \nonumber \\
	& \bar{m} =0, 1, \ldots, N-1,
\end{align}
where $\bar{m}$ are the indices in the DAFT domain. In matrix form, the received AFDM signals can be further written as
\begin{equation}\label{eqRxSv}  
	\mathbf{y}_A = \sum_{p=1}^P h_p \mathbf{A} \mathbf{\Gamma}_{\textnormal{CPP}_p} \bm{\Delta}_{\nu_p} \bm{\Pi}^{d_p} \mathbf{A}^\mathrm{H} \mathbf{x}_A + \mathbf{w} ,
\end{equation}   
where the diagonal matrix $\bm{\Delta}_{\nu_p}\! =\! \operatorname{diag}\big(e^{-\textsf{j} 2 \pi \nu_p \cdot 0},e^{-\textsf{j} 2 \pi \nu_p \cdot 1},$ $\ldots, e^{-\textsf{j} 2 \pi \nu_p \cdot (N-1)} \big)$ represents the Doppler effect, $\mathbf{A}$ is the DAFT matrix, and $\mathbf{w}$ is the noise vector, while $\bm{\Pi}$ represents the forward cyclic-shift matrix given by 
\begin{equation}\label{eqCSM}  
	\mathbf{\Pi} = \left[\begin{array}{cccc}0&\cdots&0&1\\1&\cdots&0&0\\\vdots&\ddots&\ddots&\vdots\\0&\cdots&1&0\end{array}\right]_{N\times N},
\end{equation}
and $\mathbf{\Gamma}_{\text{CPP}_p}=\operatorname{diag}\big(\omega_{p,0}, \omega_{p,1}, \ldots, \omega_{p,N} \big)$ is the $N \times N$ diagonal matrix for CPP with
\begin{equation}\label{CPP_matrix} 
	\omega_{p,n} = \begin{cases}e^{-\textsf{j} 2\pi c_1(N^2-2N(d_p-n))},& n<d_p, \\ \qquad \qquad 1, & n\ge d_p . \end{cases}
\end{equation}
Upon receiving the signal $\mathbf{y}_A$, the ML detector can be employed for signal detection.
	

\begin{figure*}[!b]   
\vspace*{-4mm}
\centering
\includegraphics[width=6.4in]{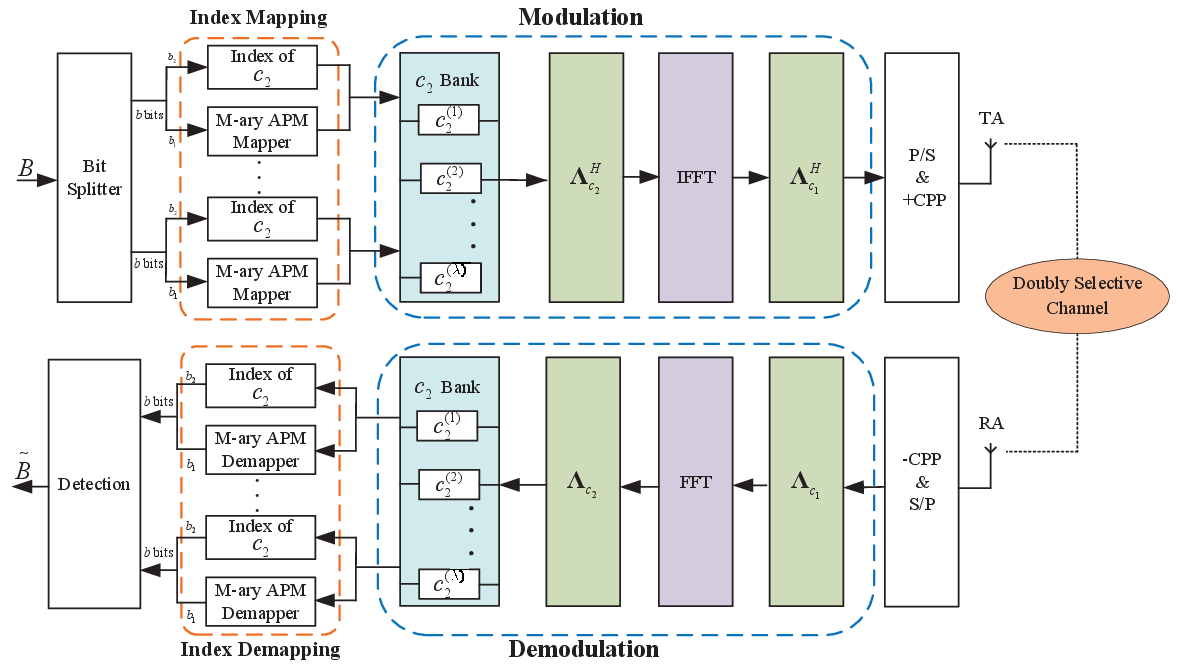}
\vspace*{-3mm}
\caption{Transceiver structure of the proposed AFDM-PIM scheme.}
\label{System-Model}
\vspace*{-1mm}
\end{figure*}	
	
\section{Proposed AFDM-PIM Scheme}\label{S3}

We first analyze the orthogonality of AFDM subcarriers and then derive the proposed AFDM-PIM framework. The input-output relation and parameter settings of the AFDM-PIM are also presented. 

\subsection{Orthogonality Analysis of {AFDM} Subcarriers}\label{S3.1}

	Following the modulation process of AFDM, (1) can also be expressed as 
\begin{equation}\label{eqMod} 
	s_A[n] = \sum_{m=0}^{N-1} x_A[m] \phi_n(m) ,~ n=0, 1, \ldots,N-1,
\end{equation}
where $\phi_n(m) $ denotes the $m$-th chirp-like subcarrier given by
\begin{equation}\label{eqC-Lsub} 
	\phi_n(m) = \frac{1}{\sqrt{N}} \, e^{\textsf{j} 2 \pi\left(c_1 n^2+c_2 m^2+\frac{m n}{N}\right)}. 
\end{equation}
	
	The following Theorem demonstrates the flexibility of the pre-chirp parameter $c_2$ assignment for different subcarriers.

\begin{theorem}\label{T1}
Applying distinct $c_2$ to different subcarriers in the AFDM modulation process will still preserve their orthogonality.
\end{theorem}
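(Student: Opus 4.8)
The plan is to verify the claim directly from the definition of the chirp-like subcarriers by computing the inner product of two of them over the time index $n$, after replacing the single pre-chirp parameter $c_2$ in \eqref{eqC-Lsub} by a subcarrier-dependent value $c_{2,m}$ for the $m$-th subcarrier. Concretely, I would set
\begin{equation*}
\tilde{\phi}_n(m) = \frac{1}{\sqrt{N}}\, e^{\textsf{j} 2\pi\left(c_1 n^2 + c_{2,m} m^2 + \frac{mn}{N}\right)},
\end{equation*}
and the goal reduces to showing $\sum_{n=0}^{N-1} \tilde{\phi}_n(m)\,\tilde{\phi}^*_n(m') = \delta_{m m'}$ for all $m, m' \in \{0, 1, \ldots, N-1\}$.

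First I would form the product $\tilde{\phi}_n(m)\,\tilde{\phi}^*_n(m')$ and observe that the post-chirp factor $e^{\textsf{j} 2\pi c_1 n^2}$, being identical for every subcarrier at a given sample $n$, cancels against its conjugate. What survives is a single unit-modulus phase $e^{\textsf{j} 2\pi (c_{2,m} m^2 - c_{2,m'} (m')^2)}$, which does not depend on $n$, multiplying the ordinary DFT kernel $\frac{1}{N}\sum_{n=0}^{N-1} e^{\textsf{j} 2\pi (m-m')n/N}$. Invoking the geometric-sum identity $\sum_{n=0}^{N-1} e^{\textsf{j} 2\pi (m-m')n/N} = N\,\delta_{m m'}$ on this index range, the case $m \neq m'$ makes the whole expression vanish regardless of the chosen alphabet $\{c_{2,m}\}$, while for $m = m'$ the leftover phase collapses to $e^{\textsf{j} 2\pi (c_{2,m} m^2 - c_{2,m} m^2)} = 1$ and the inner product equals $\frac{1}{N}\cdot N = 1$. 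This establishes orthonormality, hence the theorem.

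As a complementary viewpoint, I would also note that the modified modulation operator factorizes as $\bm{\Lambda}_{c_1}\,\big(\tfrac{1}{\sqrt{N}}\mathbf{W}\big)\,\bm{\Lambda}_{\mathbf{c}_2}$, where $\bm{\Lambda}_{c_1} = \operatorname{diag}\big(e^{\textsf{j} 2\pi c_1 n^2}\big)$, $\tfrac{1}{\sqrt{N}}\mathbf{W}$ with $(\mathbf{W})_{nm} = e^{\textsf{j} 2\pi mn/N}$ is the unitary IDFT matrix, and $\bm{\Lambda}_{\mathbf{c}_2} = \operatorname{diag}\big(e^{\textsf{j} 2\pi c_{2,m} m^2}\big)$ is diagonal with unit-modulus entries \emph{even when the $c_{2,m}$ differ across $m$}; a product of unitary matrices is unitary, so its columns, which are exactly the subcarriers, remain orthonormal. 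The calculation is essentially one line, so I do not anticipate a genuine obstacle; the only point requiring care is recognizing that the per-subcarrier pre-chirp enters solely as a right-multiplying diagonal, i.e.\ a column-wise phase, which is precisely why it leaves the Gram matrix unchanged and thus remains free to encode index information.
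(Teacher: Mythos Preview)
Your proposal is correct and follows essentially the same route as the paper: form the inner product, cancel the common $e^{\textsf{j}2\pi c_1 n^2}$ factor, pull out the $n$-independent pre-chirp phase, and evaluate the remaining geometric sum over $n$ to obtain $\delta_{mm'}$. Your complementary unitary-factorization viewpoint is a nice addition not present in the paper, but the core computation is the same.
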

	
\begin{proof}
  The inner product between two subcarriers of AFDM, which utilize the same post-chirp parameter $c_1$ but distinct values of pre-chirp parameter $c_2$, designated as $\phi_n^{c_1,c_{2,1}}(m)$ and $\phi_n^{c_1,c_{2,2}}(m)$, respectively, is given by 
\begin{align}\label{eqInPr} 
	& \sum_{n=0}^{N-1} \phi_n^{c_1,c_{2,1}}(m_1) \big(\phi_n^{c_1,c_{2,2}}(m_2)\big)^{*} \nonumber \\
	& \hspace*{4mm}= \frac{1}{N} e^{-\textsf{j} 2\pi\left(c_{2,1}m_1^2-c_{2,2}m_2^2\right)} \sum_{n=0}^{N-1}e^{-\textsf{j} \frac{2\pi}N(m_1-m_2)n} \nonumber \\
	& \hspace*{4mm}= \frac{1}{N} e^{-\textsf{j} 2\pi\left(c_{2,1}m_1^2-c_{2,2}m_2^2\right)} \frac{1-e^{-\textsf{j} 2\pi N\left(\frac{m_1-m_2}N\right)}}{1-e^{-\textsf{j} 2\pi\left(\frac{m_1-m_2}N\right)}} \nonumber \\
	& \hspace*{4mm}= \begin{cases}1,&m_1 = m_2, \\0,&\text{otherwise.}\end{cases}
\end{align}
It is evident that the orthogonality among AFDM subcarriers is maintained when different values of $c_2$ are employed.
\end{proof}

	Inspired by Theorem~\ref{T1}, we design the AFDM-PIM scheme, as shown in Fig.~\ref{System-Model}, which utilizes the flexibility of $c_2$ assignment to convey additional information bits.
	
\begin{table}[!t]
\vspace*{-1mm}
\centering
\caption{Mapping rule between the index bits and the PCPs in the Case of $N_c=4$ and $\lambda=4$.}
\label{Tab1} 
\vspace*{-2mm}
	\begin{tabular}{|c||c|c|c|c|}
			\hline
			\multirow{2}{*}{Index bits} & \multicolumn{4}{c|}{PCPs for Each Group} \\
			\cline{2-5}
			& subcarrier 1 & subcarrier 2 & subcarrier 3 & subcarrier 4 \\
			\hline
			0000 & $c_2^{(1)}$ & $c_2^{(2)}$ & $c_2^{(3)}$ & $c_2^{(4)}$ \\
			0001 & $c_2^{(1)}$ & $c_2^{(2)}$ & $c_2^{(4)}$ & $c_2^{(3)}$ \\
			0010 & $c_2^{(1)}$ & $c_2^{(3)}$ & $c_2^{(2)}$ & $c_2^{(4)}$ \\
			0011 & $c_2^{(1)}$ & $c_2^{(3)}$ & $c_2^{(4)}$ & $c_2^{(2)}$ \\
			0100 & $c_2^{(1)}$ & $c_2^{(4)}$ & $c_2^{(2)}$ & $c_2^{(3)}$ \\
			0101 & $c_2^{(1)}$ & $c_2^{(4)}$ & $c_2^{(3)}$ & $c_2^{(2)}$ \\
			0110 & $c_2^{(2)}$ & $c_2^{(1)}$ & $c_2^{(3)}$ & $c_2^{(4)}$ \\
			0111 & $c_2^{(2)}$ & $c_2^{(1)}$ & $c_2^{(4)}$ & $c_2^{(3)}$ \\
			1000 & $c_2^{(2)}$ & $c_2^{(3)}$ & $c_2^{(1)}$ & $c_2^{(4)}$ \\
			1001 & $c_2^{(2)}$ & $c_2^{(3)}$ & $c_2^{(4)}$ & $c_2^{(1)}$ \\
			1010 & $c_2^{(2)}$ & $c_2^{(4)}$ & $c_2^{(1)}$ & $c_2^{(3)}$ \\
			1011 & $c_2^{(2)}$ & $c_2^{(4)}$ & $c_2^{(3)}$ & $c_2^{(1)}$ \\
			1100 & $c_2^{(3)}$ & $c_2^{(1)}$ & $c_2^{(2)}$ & $c_2^{(4)}$ \\
			1101 & $c_2^{(3)}$ & $c_2^{(1)}$ & $c_2^{(4)}$ & $c_2^{(2)}$ \\
			1110 & $c_2^{(3)}$ & $c_2^{(2)}$ & $c_2^{(1)}$ & $c_2^{(4)}$ \\
			1111 & $c_2^{(3)}$ & $c_2^{(2)}$ & $c_2^{(4)}$ & $c_2^{(1)}$ \\
			\hline
	\end{tabular}
	\vspace*{-4mm}
\end{table}
	
\subsection{Transmitter}\label{S3.2}

  Consider the same AFDM symbol vector comprising $N$ $M$-ary constellation symbols in the DAFT domain as in Section~\ref{S2}. At the transmitter, unlike classical AFDM, the $N$ AFDM subcarriers are divided into $G$ groups, with each group comprising $N_c\! =\! N / G$ chirp subcarriers. The total $B$ information bits are split into $G$ parallel streams of $b\! =\! B/G$ bits for each subcarrier group. Each $b$-bit stream is further segmented into $b_1$ symbol bits and $b_2$ index bits, i.e., $b\! =\! b_{1} + b_{2}$. Within the $g$-th group ($1\le g\le G$), the $b_{1}\! =\! N_c\log_{2}(M)$ symbol bits are conveyed by $N_c$ $M$-ary symbols, denoted as $\mathbf{x}^g\! =\! [x^g[0], x^g[1], \ldots, x^g[N_c - 1]]^\mathrm{T}\! \in\! \mathbb{C}^{N_c\times1}$. On the other hand, each subcarrier is assigned with a unique $c_2$ value from a finite alphabet of $\lambda$ legitimate $c_2$ realizations (corresponding to the `$c_2$ Bank' in Fig.~\ref{System-Model}), i.e., $\mathcal{P}_c\! =\! \left\{{c_2^{(1)}, c_2^{(2)}, \ldots, c_2^{(\lambda )}} \right\}$. Specifically, the pre-chirping pattern (PCP) of the $c_2$ values in the $g$-th group, denoted by $\mathrm{P}_{c_2}^g\! =\! [c_{2,N_c\left(g-1\right)}, c_{2,N_c\left(g-1\right)+1}, \ldots, c_{2, N_c g - 1}]^\mathrm{T}\! \in\! \mathbb{C}^{N_c\times1}$ is determined by the $b_2$ index bits according to the pre-defined relationship between the index-bit stream and the permutations of $N_c$ elements from $\mathcal{P}_c$, where $c_{2,m}\! \in \!\mathcal{P}_c$ represents the pre-chirp parameter of the $m$-th subcarrier. $\mathrm{P}_{c_2}\! =\! \left[\mathrm{P}_{c_2}^1, \mathrm{P}_{c_2}^2, \ldots, \mathrm{P}_{c_2}^G \right] $ represents the PCP for all the groups (PCPG), and all the possible PCPGs are combined into a set, termed $\mathcal{S}_p$. Table~\ref{Tab1} exemplifies the mapping rule in the case of $N_c=4$ and $\lambda=4$. Every $b_2$ index bits correspond to one row in Table~\ref{Tab1}. Hence, aside from the APM bits, additional $b_2$ information bits can be implicitly conveyed by the indices of $\mathrm{P}_{c_2}^g$, and
\begin{equation}\label{eqB2bits}  
	b_2 = \begin{cases}\left\lfloor\log_2(\mathcal{C}_{\lambda,N_c} N_c!)\right\rfloor, \vspace{0.1cm} & \lambda \geq N_c, \\
	\left\lfloor \log_2(\lambda!) \right\rfloor \frac{N_c}{\lambda}, & \lambda <N_c \text{ \& } \lambda \mid N_c, \\
	\left\lfloor\log_2(\mathcal{C}_{\lambda,N_c} \lambda! \lambda ^{(N_c-\lambda )})\right\rfloor, &  \text{otherwise,} \end{cases}
\end{equation} 
where $\mathcal{C}_{\lambda,N_c}$ is defined as
\begin{equation}\label{eqCl} 
	\mathcal{C}_{\lambda,N_c} = \binom{\max\left(\lambda,N_c \right)}{\min\left(\lambda,N_c \right)} .
\end{equation} 
For ease of design and subsequent optimization, we typically select the case of \(\lambda = N_c \), which implies \(b_2 = \left\lfloor \log_2(N_c!) \right\rfloor\).
	
	After the mapping for all the groups, the time-domain transmitted signals can be generated through $N$-point IDAFT operation, expressed as
\begin{equation}\label{IDAFT} 
	s[n] = \frac{1}{\sqrt{N}} \sum_{m=0}^{N-1} x[m] e^{\textsf{j} 2 \pi\left(c_1 	n^2+c_{2,m} m^2+n m / N\right)}. 
\end{equation}
The matrix form of (\ref{IDAFT}) can be formulated as
\begin{equation} 
	\mathbf{s} = \mathbf{A}^\mathrm{H} \mathbf{x} = \mathbf{\Lambda}_{c_1}^\mathrm{H} \mathbf{F}^\mathrm{H} \bm{\Lambda}_{c_2}^\mathrm{H} \mathbf{x},
\end{equation}
where $\mathbf{x}\! =\! \left[\big(\mathbf{x}^1\big)^{\rm T}, \big(\mathbf{x}^2\big)^{\rm T}, \ldots, \big(\mathbf{x}^G\big)^{\rm T}\right]^\mathrm{T} $, and $\bm\Lambda_{c_2}$ and $\bm\Lambda_{c_1}$ represent the pre-chirp and post-chirp diagonal matrices, respectively, expressed as
\begin{equation}\label{eq-c2} 
	\bm\Lambda_{c_2} = \operatorname{diag}\left(e^{-\textsf{j} 2\pi c_{2,m} m^2},m=0,1,\ldots,N-1\right),
\end{equation}
\begin{equation}\label{eq-c1} 
	\bm\Lambda_{c_1} = \operatorname{diag}\left(e^{-\textsf{j} 2\pi c_1 n^2}, n=0,1,\ldots,N-1\right) ,
\end{equation}
while $\mathbf{F}$ denotes the DFT matrix with elements $\mathbf{F}(m,n) = e^{-\textsf{j} 2 \pi m n/N}/\sqrt{N}$, $m,n=0,1,\ldots, N-1$.
	
	Like classical AFDM, our proposed AFDM-PIM also requires the CPP to mitigate the effects of multi-path propagation effectively. Without loss of generality, the length of the CPP is assumed to be greater than the maximum channel delay spread. 

\subsection{Receiver}\label{S3.3}
	
	At the receiver, considering the doubly dispersive channel in (\ref{eqCIR}), the received time-domain signals after removing the CPP can be written as 
\begin{equation}\label{received-signals} 
	r[n] = \sum_{p=1}^{P} h_p s[n - d_p] e^{-\textsf{j} 2\pi \nu_p n} + w_r[n] ,
\end{equation}
where $w_r[n] \sim \mathcal{CN}(0,N_{0})$ is the AWGN. The matrix form of (\ref{received-signals}) is given by
\begin{equation}\label{eqMRxS} 
	\mathbf{r} = \mathbf{H}\mathbf{s} + \mathbf{w} = \sum_{p=1}^{P} h_{p} \mathbf{\Gamma}_{\textnormal{CPP}_{p}} \bm{\Delta}_{\nu_{p}} \mathbf{\Pi}^{d_{p}} \mathbf{s} + \mathbf{w}_r,
\end{equation}    
where $\mathbf{w}_r\! =\! [w_r[0], w_r[1],\ldots,w_r[N-1]]^\mathrm{T}$ is the time-domain noise vector. As given in Section~\ref{S2}, $\bm{\Delta}_{\nu_{p}}$ represents the Doppler effect, $\mathbf{\Pi}$ is the forward cyclic-shift matrix with $\mathbf{\Pi}^{d_{p}}$ modeling the delay extension, and $\mathbf{\Gamma}_{\mathrm{CPP}_{p}}$ is the effective CPP matrix.
	
	By applying the DAFT operation, the received DAFT-domain symbols are obtained as
\begin{equation}\label{eqY} 
	y[\bar{m}]=\frac{1}{\sqrt{N}}\sum_{n=0}^{N-1} r[n] e^{-\textsf{j} 2 \pi\left(c_1 n^2+c_{2,\bar{m}} \bar{m}^2+n \bar{m} / N\right)},
\end{equation}
which can also be grouped into a vector as
\begin{align}\label{y_matrix} 
	\mathbf{y} =& \mathbf{A} \mathbf{r} = \sum_{p=1}^P h_p \mathbf{A}\mathbf{\Gamma}_{\mathrm{CPP}_p} \mathbf{\Delta}_{\nu_p} \mathbf{\Pi}^{d_p}\mathbf{A}^\mathrm{H} \mathbf{x} + \mathbf{A}\mathbf{w}_r \nonumber \\
	=& \mathbf{H}_\textnormal{eff}\mathbf{x} + \mathbf{w},
\end{align}
where $\mathbf{H}_\textnormal{eff}$ is the effective channel matrix in the DAFT-domain and $\mathbf{w} = \mathbf{A}\mathbf{w}_r$. 

The time-frequency doubly dispersive channel can be estimated through pilot-aided channel estimation algorithms \cite{10557524,8671740,10711268}. This paper will not provide further elaboration on this subject for brevity. Given the estimated $\mathbf{\hat{H}}_\mathrm{eff}$ for the effective channel matrix $\mathbf{H}_\textnormal{eff}$, the ML data detection is formulated as the following optimization
\begin{equation}\label{ML1} 
	\left(\hat{\mathbf{x}},\hat{\mathrm{P}}_{c_2} \right)= \arg \min\limits_{\forall \mathbf{x},\mathrm{P}_{c_2}}\left\|\mathbf{y}-\mathbf{\hat{H}}_\mathrm{eff} \mathbf{x}\right\|^2 .
\end{equation}

\begin{figure*}[!bp]\setcounter{equation}{31}
\vspace*{-4mm}
\hrulefill
\begin{align}\label{eqCPEP}  
	& \Pr([\mathbf{x},\mathbf{\Phi}]\! \to\! [\hat{\mathbf{x}},\hat{\mathbf{\Phi}}]|\mathbf{h}) = \Pr(\|\mathbf{y}\! -\! \hat{\mathbf{\Phi}} (\hat{\mathbf{x}})\mathbf{h}\|^2\! <\! \|\mathbf{y}\! -\! \mathbf{\Phi}(\mathbf{x})\mathbf{h}\|^2) \nonumber \\
	& \hspace*{10mm} = \mathrm{Pr}\left(\textnormal{Tr}\left[(\mathbf{y} - \hat{\mathbf{\Phi}} (\hat{\mathbf{x}})\mathbf{h})^{\rm H} (\mathbf{y} - \hat{\mathbf{\Phi}} (\hat{\mathbf{x}})\mathbf{h}) - (\mathbf{y} - \mathbf{\Phi}(\mathbf{x})\mathbf{h})^{\rm H} (\mathbf{y}-\mathbf{\Phi}(\mathbf{x})\mathbf{h})\right]<0\right)  \nonumber \\
	& \hspace*{10mm} = \mathrm{Pr}\left(\textnormal{Tr}\left[\mathbf{h}^{\rm H}\left(\mathbf{\Phi}(\mathbf{x}) - \hat{\mathbf{\Phi}} (\hat{\mathbf{x}})^{\rm H}\right)\left(\mathbf{\Phi}(\mathbf{x}) - \hat{\mathbf{\Phi}} (\hat{\mathbf{x}})^{\rm H}\right)\mathbf{h}\right]-\mathrm{C}<0\right) 
   = \Pr\left(\mathbf{\chi}>\big\|\big(\hat{\mathbf{\Phi}} (\hat{\mathbf{x}}) - \mathbf{\Phi}(\mathbf{x})\big)\mathbf{h}\big\|^2 \right),
\end{align} 
\vspace*{-1mm}
\end{figure*} 
	
\subsection{Input-Output Relation and Parameter Settings}\label{S3.4}

	Substituting (\ref{IDAFT}) and (\ref{received-signals}) into (\ref{eqY}), the input-output relation of AFDM-PIM can be obtained as\setcounter{equation}{22}
\begin{equation}\label{input-output} 
		y[\bar{m}] \! = \! \frac{1}{N} \sum_{p=1}^P \! \sum_{m=0}^{N-1} \! h_p \xi_{(p, \bar{m}, m) } \eta_{(p, \bar{m}, m)} x[m] + w[\bar{m}],
\end{equation}
where 
\begin{equation}\label{eqXI} 
	\xi_{(p, \bar{m}, m)} = e^{\textsf{j}\frac{2\pi}{N}\left(N c_{2,m}m^2-N c_{2,\bar{m}}\bar{m}^2-m d_p+N c_1 d_p^2\right)},
\end{equation}
\begin{align}\label{eqETA} 
	\eta_{(p, \bar{m}, m)} =& \sum_{n=0}^{N - 1} e^{-\textsf{j} \frac{2\pi}{N}((\bar{m} - m + \alpha_{p} + 2N c_1 d_p)n)} \nonumber \\
	=& \frac{e^{-\textsf{j} 2\pi(\bar{m} - m + \alpha_{p}+2N c_{1}d_{p})}-1}{e^{-\textsf{j} \frac{2\pi}{N}(\bar{m} - m+\alpha_{p}+2N c_{1}d_{p})}-1}.
\end{align}
In matrix representation, (\ref{input-output}) can be rewritten as
\begin{equation} \label{input-output_matrix}
	\mathbf{y} = \sum_{p=1}^P h_p \mathbf{H}_p \mathbf{x} + \mathbf{w},
\end{equation}
where the elements of $\mathbf{H}_p$ are given by 
\begin{align}\label{H_p} 
	H_p[\bar{m},m] =& \frac{1}{N} \xi_{(p, \bar{m}, m) } \eta_{(p, \bar{m}, m)} \nonumber \\
	=& \begin{cases}\xi_{(p, \bar{m}, m) }, & m=(\bar{m}+\mathrm{loc}_p)_{N}, \\\qquad 0, & \text{otherwise,}\end{cases}
\end{align}
where $\mathrm{loc}_p = (\alpha_{p} + 2 N c_1 d_p)_{N}$. Since the range of $\mathrm{loc}_p$ is $[-\alpha_{\max} + 2 N c_1 d_p,\, \alpha_{\max} + 2 N c_1 d_p]$, we define $\mathrm{loc}_p \in \mathbb{K}_p$, with $\mathbb{K}_{p} = \{ -\alpha_{\mathrm{max}} + 2 N c_1 d_p, \ldots, \alpha_{\mathrm{max}} + 2 N c_1 d_p\}$.

	It can be seen from (\ref{H_p}) that for the two chirp parameters of AFDM-PIM, only the post-chirp parameter exerts an influence on the positions of non-zero entries in the matrix $\mathbf{H}_p$ determined by $\mathrm{loc}_p$, which is independent of the pre-chirp parameter.
	Therefore, like classical AFDM, full diversity order can be obtained by the proposed AFDM-PIM under doubly dispersive channels by adjusting $c_1$  to avoid possible overlap between non-zero elements of $\mathbf{H}_i$ and $\mathbf{H}_j$ $(i\neq j)$. Specifically, this requires that the intersection between the corresponding ranges for $\mathrm{loc}_i$ and $\mathrm{loc}_j$ is empty, i.e.,\setcounter{equation}{27}
\begin{equation}\label{empty} 
	\mathbb{K}_{i} \cap \mathbb{K}_{j}=\emptyset.
\end{equation}	
Without loss of generality, assume that $d_i \le d_j$. Then the constraint (\ref{empty}) can be transformed into
\begin{equation}\label{eqTcont} 
		c_1 > \frac{2\alpha_{\max}}{2N(d_j-d_i)}.
\end{equation}
Since the minimum value of $(d_j - d_i)$ with $i\neq j$ equals one, it can be concluded that $c_1$ can be set as:
\begin{equation}\label{c1} 
	c_1 = \frac{2\alpha_{\max}+1}{2N}.
\end{equation}
Following the configuration (\ref{c1}) and $(d_{\max}+1)(2\alpha_{\max}+1)\leq N$, the channel paths with different delays or Doppler shifts can be distinguished within the DAFT domain, as illustrated in Fig.~\ref{fig3}, which shows an example of the effective channel matrix of an AFDM-PIM system under a three-path channel. The rigorous diversity analysis will be provided in Section~\ref{S4}.

\begin{figure}[!t] 
\centering
\includegraphics[width=2.5in]{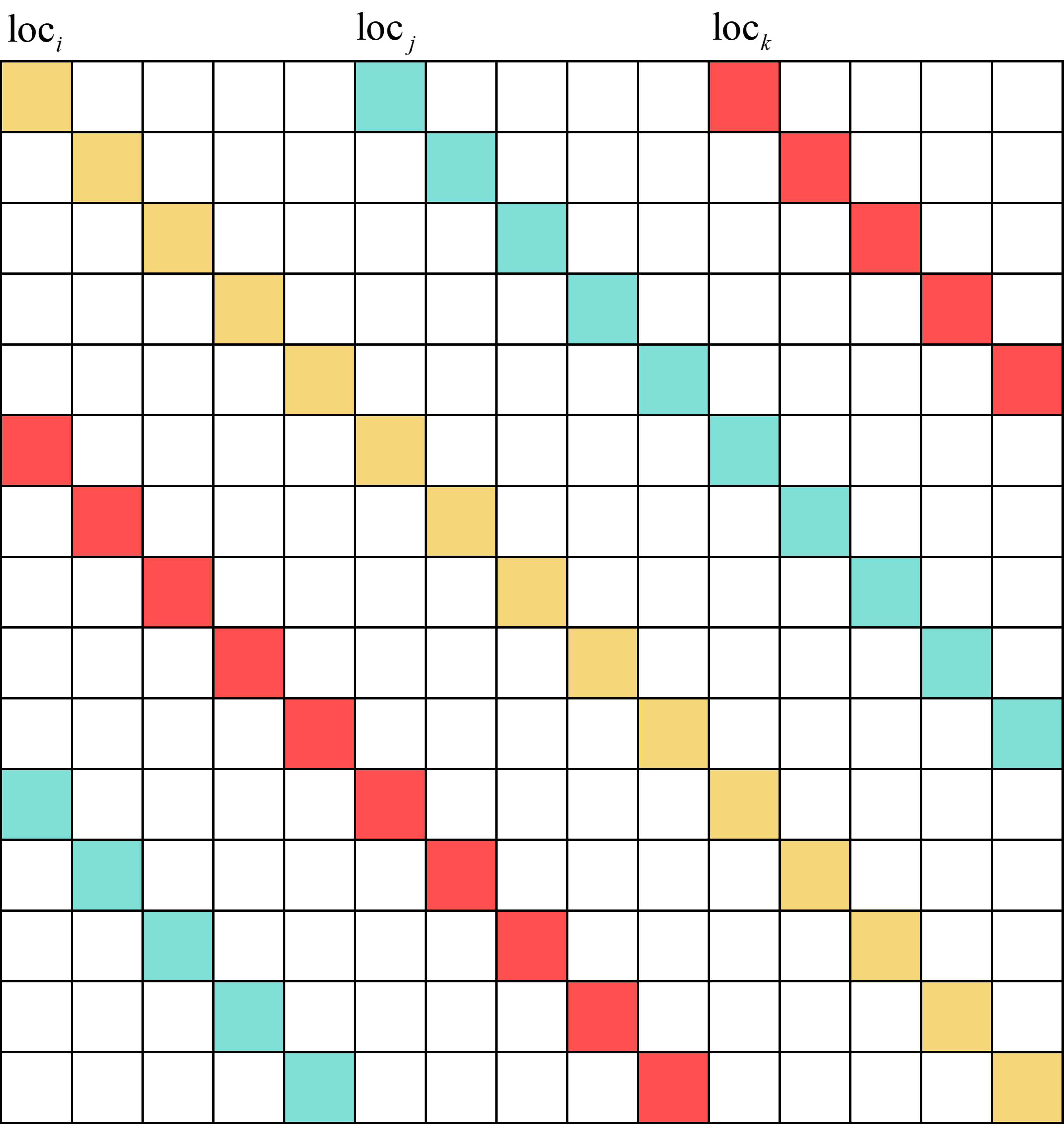}
\vspace{-2mm}
\caption{Example of the effective channel matrix of a three-path channel.}
\label{fig3}
\vspace{-3mm}
\end{figure}
	
\section{Performance Analysis}\label{S4}

	We first derive the average BEP (ABEP) upper bounds for our AFDM-PIM scheme with the ML detection. Then we analyze the diversity order achieved by the system.
	
\subsection{Error Performance Analysis}\label{S4.1}

	To facilitate the analysis, the received DAFT-domain signal (\ref{input-output_matrix}) can be rewritten as
\begin{equation}\label{inputoutput29} 
	\mathbf{y} = \mathbf{\Phi}(\mathbf{x})\mathbf{h} + \mathbf{w},
\end{equation}
where $\mathbf{\Phi}(\mathbf{x})\! =\! [\mathbf{H}_{1}\mathbf{x},\mathbf{H}_{2}\mathbf{x}, \ldots, \mathbf{H}_{P}\mathbf{x}]\! \in\! \mathbb{C}^{N \times P}$ and $\mathbf{h}\! =\! [h_{1},h_{2},\ldots,h_{P}]^{\rm T}\! \in\! \mathbb{C}^{P \times 1}$. The conditional PEP (CPEP) between $\mathbf{\Phi} (\mathbf{x})$ and its estimate $\hat{\mathbf{\Phi}} (\hat{\mathbf{x}})$ can be calculated as (\ref{eqCPEP}), shown at the bottom of this page,
where $\chi\! =\! \mathbf{w}^{\rm H}\big(\hat{\mathbf{\Phi}} (\hat{\mathbf{x}})\! -\! \mathbf{\Phi}(\mathbf{x})\big)\mathbf{h}\! +\! \mathbf{h}^{\rm H} \big(\hat{\mathbf{\Phi}} (\hat{\mathbf{x}})\! -\! \mathbf{\Phi}(\mathbf{x})\big)^{\rm H} \mathbf{w}$. Since $\chi$ follows Gaussian distribution with variance $2N_{0}\big\|(\hat{\mathbf{\Phi}} (\hat{\mathbf{x}}) - \mathbf{\Phi}(\mathbf{x}))\mathbf{h}\big\|^2$, the CPEP can be expressed as \setcounter{equation}{32}
\begin{equation}\label{eqCPEP1} 
	\Pr\big([\mathbf{x},\mathbf{\Phi}]\to[\hat{\mathbf{x}},\hat{\mathbf{\Phi}} ]|\mathbf{h}\big) = Q\biggl(\sqrt{\frac{\delta}{2N_{0}}}\biggr),
\end{equation}
where $\delta\! =\! \left\|\big(\hat{\mathbf{\Phi}} (\hat{\mathbf{x}})\! -\! \mathbf{\Phi}(\mathbf{x})\big)\mathbf{h}\right\|^2\! =\!\mathbf{h}^\mathrm{H}\mathbf{\Psi}\mathbf{h}$ with $\mathbf{\Psi}\! =\! \big(\hat{\mathbf{\Phi}} (\hat{\mathbf{x}})\! -\! \mathbf{\Phi}(\mathbf{x})\big)^\mathrm{H}\big(\hat{\mathbf{\Phi}} (\hat{\mathbf{x}})\! -\! \mathbf{\Phi}(\mathbf{x})\big)$. According to \cite{1188428}, $Q$-function can be approximated as 
\begin{equation}\label{eqQF} 
	Q(x) \approx \frac{1}{12}e^{-x^2/2}+\frac{1}{4}e^{-2x^2/3}.
\end{equation}
Hence the CPEP can be approximated as 
\begin{equation}\label{eqCPEP2} 
	\Pr\big([\mathbf{x},\mathbf{\Phi}]\to[\hat{\mathbf{x}},\hat{\mathbf{\Phi}} ]|\mathbf{h}\big)\approx \frac1{12}e^{-\varsigma _1\delta} + \frac14e^{-\varsigma _2\delta},
\end{equation}
with $\varsigma_1\! =\! \frac{1}{4 N_0}$ and $\varsigma_2\! =\! \frac{1}{3 N_0}$. Consequently, the unconditional PEP (UPEP) can be calculated as
\begin{align}\label{PEP1} 
	\Pr\big([\mathbf{x},\mathbf{\Phi}] & \to [\hat{\mathbf{x}},\hat{\mathbf{\Phi}} ]\big) =\mathrm{E}\Big(\Pr\big([\mathbf{x},\mathbf{\Phi}]\to[\hat{\mathbf{x}},\hat{\mathbf{\Phi}} ]|\mathbf{h}\big)\Big) \nonumber \\
	&\approx \int_{0}^{+\infty}\biggl(\frac{1}{12}e^{-\varsigma_{1}\delta}+\frac{1}{4}e^{-\varsigma_{2}\delta}\biggr)p_{\delta}(\delta) \mathrm{d} \delta , 
\end{align}
where $p_{\delta}(\delta)$ denotes the probability density function (PDF) of $\delta$. Noting $p_{\delta}(\delta)=0$ for $\delta <0$ and utilizing the definition of the moment-generating function (MGF) $M_{\eta}(s)\! =\! \mathrm{E}\big(e^{s\eta}\big)\! =\! \int_{-\infty}^{+\infty}e^{s\eta}p_{\eta}(\eta )\mathrm{d} \eta$, we can calculate the UPEP (\ref{PEP1}) as 
\begin{equation}\label{PEP2} 
	\Pr\big([\mathbf{x},\mathbf{\Phi}]\to[\hat{\mathbf{x}},\hat{\mathbf{\Phi}}]\big) \approx \frac{1}{12}M_\delta(-\varsigma_1)+\frac{1}{4}M_\delta(-\varsigma_2).
\end{equation}

\begin{theorem}[\cite{MGF}]\label{T2}
  For an $N \times N$ Hermitian matrix $\mathbf{Q}$ and an $N \times 1$ zero-mean complex-valued random vector $\mathbf{v}$ with covariance matrix $\mathbf{L}$, the characteristic function of the quadratic form $f\! =\! \mathbf{v}^\mathrm{H} \mathbf{Q} \mathbf{v}$ can be expressed as
\begin{equation}\label{eqT2-1} 
	\varphi_{f}(t) = |\mathbf{I} - \textsf{j} \cdot t \mathbf{L} \mathbf{Q}|^{-1} = \prod_{\iota = 1}^{\kappa} \frac{1}{1 - \textsf{j} \cdot t \lambda_{\iota}^{(\mathbf{L} \mathbf{Q})}},
\end{equation}
where $\lambda_{\iota}^{(\mathbf{L} \mathbf{Q})}$ is the $\iota$-th non-zero eigenvalue of matrix $\mathbf{L} \mathbf{Q}$, and $\kappa$ denotes the number of the non-zero eigenvalues, i.e., the rank of $\mathbf{L} \mathbf{Q}$. Using the relationship $M_f(s) = \varphi_{f}(-\textsf{j}\cdot s)$, the MGF of $f$ is given by
\begin{equation}\label{eqT2-2} 
	M_f(s) = \prod_{\iota = 1}^{\kappa}\frac{1}{1-s\lambda_{\iota}^{(\mathbf{L} \mathbf{Q})}}.
\end{equation}
\end{theorem}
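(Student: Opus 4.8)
The plan is to reproduce the classical computation of the distribution of a Hermitian quadratic form in a circularly symmetric complex Gaussian vector, which is the relevant setting here since in the application $\mathbf{v}$ is the channel vector $\mathbf{h}$: I would first establish the determinant form $\varphi_{f}(t) = |\mathbf{I} - \textsf{j} t \mathbf{L}\mathbf{Q}|^{-1}$ by a Gaussian integral, and then read off the eigenvalue-product form and the MGF as immediate consequences.

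\textbf{Step 1 (the determinant form).} I would assume first that $\mathbf{L} \succ \mathbf{0}$, so $\mathbf{v}$ has density $p(\mathbf{v}) = (\pi^{N}|\mathbf{L}|)^{-1} e^{-\mathbf{v}^{\mathrm{H}}\mathbf{L}^{-1}\mathbf{v}}$. Starting from $\varphi_{f}(t) = \mathrm{E}\big(e^{\textsf{j} t \mathbf{v}^{\mathrm{H}}\mathbf{Q}\mathbf{v}}\big)$,
\[
	\varphi_{f}(t) = \frac{1}{\pi^{N}|\mathbf{L}|}\int_{\mathbb{C}^{N}} e^{-\mathbf{v}^{\mathrm{H}}(\mathbf{L}^{-1} - \textsf{j} t \mathbf{Q})\mathbf{v}}\,\mathrm{d}\mathbf{v}.
\]
For real $t$, the matrix $\mathbf{M} = \mathbf{L}^{-1} - \textsf{j} t\mathbf{Q}$ has Hermitian part $\mathbf{L}^{-1} \succ \mathbf{0}$ (since $\mathbf{Q} = \mathbf{Q}^{\mathrm{H}}$ makes $\textsf{j} t\mathbf{Q}$ skew-Hermitian), so the complex Gaussian integral identity $\int_{\mathbb{C}^{N}} e^{-\mathbf{v}^{\mathrm{H}}\mathbf{M}\mathbf{v}}\,\mathrm{d}\mathbf{v} = \pi^{N}/|\mathbf{M}|$ — which holds whenever the Hermitian part of $\mathbf{M}$ is positive definite, by analytic continuation from the Hermitian case — applies, yielding $\varphi_{f}(t) = 1/(|\mathbf{L}|\,|\mathbf{M}|) = |\mathbf{L}\mathbf{M}|^{-1} = |\mathbf{I} - \textsf{j} t\mathbf{L}\mathbf{Q}|^{-1}$.

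\textbf{Step 2 (eigenvalue product and MGF).} Because $\mathbf{L}$ is a covariance matrix, $\mathbf{L}\mathbf{Q}$ is similar to the Hermitian matrix $\mathbf{L}^{1/2}\mathbf{Q}\mathbf{L}^{1/2}$ (conjugation by $\mathbf{L}^{1/2}$ when $\mathbf{L}\succ\mathbf{0}$, and otherwise by passing to $\mathrm{range}(\mathbf{L})$), so every eigenvalue $\lambda_{\iota}^{(\mathbf{L}\mathbf{Q})}$ is real. Expanding $|\mathbf{I} - \textsf{j} t\mathbf{L}\mathbf{Q}| = \prod_{i=1}^{N}(1 - \textsf{j} t\lambda_{i}^{(\mathbf{L}\mathbf{Q})})$ and discarding the $N-\kappa$ unit factors produced by the zero eigenvalues, with $\kappa = \mathrm{rank}(\mathbf{L}\mathbf{Q})$, gives the stated product. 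Finally, the identity of characteristic functions extends to purely imaginary arguments by analytic continuation, so the relation $M_{f}(s) = \varphi_{f}(-\textsf{j} s)$, valid in a neighbourhood of $s=0$ where the MGF is finite, replaces $1 - \textsf{j} t\lambda_{\iota}^{(\mathbf{L}\mathbf{Q})}$ by $1 - s\lambda_{\iota}^{(\mathbf{L}\mathbf{Q})}$ and $|\mathbf{I} - \textsf{j} t\mathbf{L}\mathbf{Q}|$ by $|\mathbf{I} - s\mathbf{L}\mathbf{Q}|$, which is the claimed MGF.

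\textbf{Main obstacle.} The one point requiring care is the rank-deficient case: if $\mathbf{L}$ or $\mathbf{Q}$ is singular then $\mathbf{v}$ has no density on $\mathbb{C}^{N}$ and $\mathbf{M}$ need not be invertible, so Step 1 must be run on $\mathrm{range}(\mathbf{L})$ or, more conveniently, with $\mathbf{L}$ replaced by $\mathbf{L}+\epsilon\mathbf{I}$ followed by $\epsilon\downarrow 0$, using continuity of the determinant and dominated convergence. A route that sidesteps the integral altogether is to whiten and diagonalize: writing $\mathbf{v} = \mathbf{L}^{1/2}\mathbf{z}$ with $\mathbf{z}\sim\mathcal{CN}(\mathbf{0},\mathbf{I})$ gives $f = \mathbf{z}^{\mathrm{H}}(\mathbf{L}^{1/2}\mathbf{Q}\mathbf{L}^{1/2})\mathbf{z}$; diagonalizing the Hermitian matrix $\mathbf{L}^{1/2}\mathbf{Q}\mathbf{L}^{1/2}$ by a unitary, which leaves the law of $\mathbf{z}$ unchanged, expresses $f = \sum_{\iota=1}^{\kappa}\lambda_{\iota}^{(\mathbf{L}\mathbf{Q})}|z_{\iota}|^{2}$ as a combination of i.i.d.\ unit-mean exponential variables, whose individual MGFs $(1 - s\lambda_{\iota}^{(\mathbf{L}\mathbf{Q})})^{-1}$ multiply by independence to recover the product form directly.
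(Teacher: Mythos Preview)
Your argument is correct, but there is nothing to compare it against: the paper does not prove Theorem~\ref{T2} at all. It is stated there as a quoted result from the classical reference \cite{MGF} (Turin, 1960) and is then applied directly to derive the UPEP expression. So you have supplied a full derivation --- both the Gaussian-integral route and the whitening/diagonalization route --- where the paper simply invokes the literature.
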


	Since $\mathbf{h}$ is a zero mean complex vector with the covariance matrix $\frac{1}{P}\mathbf{I}_P$ and $\mathbf{\Psi}$ is a Hermitian matrix, according to Theorem~\ref{T2}, the UPEP (\ref{PEP2}) can be further expressed as
\begin{align}\label{PEP_final} 
	& \Pr\! \big([\mathbf{x},\mathbf{\Phi}]\! \to\! \big[\hat{\mathbf{x}},\hat{\mathbf{\Phi}}\big]\big)\! \approx\! \frac{1}{12}\prod_{\iota=1}^{\kappa} \frac{1}{1+\frac{\lambda_{\iota}^{(\mathbf{\Psi})}}{4PN_{0}}}\! +\! \frac{1}{4}\prod_{\iota=1}^{\kappa} \frac{1}{1\! +\! \frac{\lambda_{\iota}^{(\mathbf{\Psi})}}{3PN_{0}}},
\end{align}
where we assume that the rank of $\mathbf{\Psi}$ is $\kappa$.
	
	Moreover, based on the UPEP (\ref{PEP_final}), the ABEP upper bound for the proposed AFDM-PIM scheme can be calculated by 
\begin{align}\label{ABEP} 
	{\Pr}_{\textnormal{ABEP}} \leq & \frac{1}{b\,2^{b}} \sum_{\mathbf{x}} \sum_{\hat{\mathbf{x}}} \sum_{\mathbf{\Phi}} \sum_{\hat{\mathbf{\Phi}}} \Pr\big([\mathbf{x},\mathbf{\Phi}]\to[\hat{\mathbf{x}},\hat{\mathbf{\Phi}} ]\big) \nonumber \\
	&\hspace*{25mm}\times \tau\big([\mathbf{x},\mathbf{\Phi}]\to[\hat{\mathbf{x}},\hat{\mathbf{\Phi}}]\big) ,
\end{align}
where $\tau\big([\mathbf{x},\mathbf{\Phi}]\to[\hat{\mathbf{x}},\hat{\mathbf{\Phi}}]\big)$ represents the number of error bits caused by the corresponding pairwise error event.
	
\subsection{Diversity Analysis}\label{S4.2}

	At the high signal-to-noise ratio (SNR) region, the approximation of (\ref{PEP_final}) can be simplified as
\begin{align}\label{eqDA} 
	& \Pr\big([\mathbf{x},\mathbf{\Phi}]\to[\hat{\mathbf{x}},\hat{\mathbf{\Phi}}]\big) \approx\! \frac{1}{12}\prod_{\iota=1}^{\kappa} \frac{4PN_{0}}{\lambda_{\iota}^{(\mathbf{\Psi})}}\! +\! \frac{1}{4}\prod_{\iota=1}^{\kappa} \frac{3PN_{0}}{\lambda_{\iota}^{(\mathbf{\Psi})}} \nonumber \\
	&\hspace*{5mm} \approx \left(\prod_{\iota=1}^\kappa\lambda_{\iota}^{(\mathbf{\Psi})}\right)^{-1} \left(\frac{(4P)^\kappa}{12}+\frac{(3P)^\kappa}4\right) \mathrm{SNR}^{-\kappa},
\end{align}
where $\mathrm{SNR}$ is defined as $1/N_0$. As a result, the diversity order $\mu$ of the proposed AFDM-PIM scheme equals to the minimum value of $\kappa$, i.e., 
\begin{equation}\label{eqDA1} 
	\mu = \min \, \mathrm{rank}(\mathbf{\Psi}).
\end{equation}
Since the eigenvalues of Hermitian matrix $\mathbf{\Psi}$ can be calculated as the square of the singular values of $\big(\hat{\mathbf{\Phi}} (\hat{\mathbf{x}}) - \mathbf{\Phi}(\mathbf{x})\big)$, $\mathrm{rank}(\mathbf{\Psi}) \! =\! \mathrm{rank} \big(\hat{\mathbf{\Phi}} (\hat{\mathbf{x}}) - \mathbf{\Phi}(\mathbf{x})\big)$ and the diversity order $\mu$ can also be expressed as 
\begin{equation}\label{order} 
	\mu = \min\, \mathrm{rank} \left(\hat{\mathbf{\Phi}} (\hat{\mathbf{x}}) - \mathbf{\Phi}(\mathbf{x})\right).
\end{equation} 
Defining $\mathbf{\Phi}\big(\bar{\delta}\big)\! =\! \hat{\mathbf{\Phi}}\big(\hat{\mathbf{x}}\big) - \mathbf{\Phi}(\mathbf{x})$, the full diversity order analysis for the AFDM-PIM can be transformed into the full rank analysis of $\mathbf{\Phi}\big(\bar{\delta}\big)$ as stated in Theorem~\ref{T3} below.

\begin{theorem}\label{T3}
The proposed AFDM-PIM scheme is capable of achieving the full diversity order if the following two conditions are met.

{\bf Condition 1}: The number of the paths satisfies
\begin{equation}\label{condition1} 
	P \le (d_{\mathrm{max}}+1)(2\alpha_{\mathrm{max}}+1) \le N.
\end{equation}

{\bf Condition 2}: The pre-chirp parameters in $\mathcal{P}_c$ take the irrational numbers.
\end{theorem}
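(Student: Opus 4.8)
The reduction I need is already in hand: by \eqref{order}, the scheme achieves full diversity if and only if $\mathrm{rank}\big(\mathbf{\Phi}(\bar{\delta})\big)=P$ for every non-trivial error event $[\mathbf{x},\mathbf{\Phi}]\neq[\hat{\mathbf{x}},\hat{\mathbf{\Phi}}]$, where the $p$-th column of $\mathbf{\Phi}(\bar{\delta})=\hat{\mathbf{\Phi}}(\hat{\mathbf{x}})-\mathbf{\Phi}(\mathbf{x})\in\mathbb{C}^{N\times P}$ is $\hat{\mathbf{H}}_p\hat{\mathbf{x}}-\mathbf{H}_p\mathbf{x}$, the hats indicating that $\hat{\mathbf{H}}_p$ is built from the candidate pre-chirp pattern. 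The plan is to (i) extract the structural consequence of \textbf{Condition 1}; (ii) put each $\mathbf{H}_p$ into a modulation--shift form; (iii) conjugate by a unitary so that a Vandermonde backbone appears; and (iv) invoke \textbf{Condition 2} to eliminate the residual degeneracies, splitting according to whether the two pre-chirp patterns coincide.

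For (i), with $c_1=(2\alpha_{\max}+1)/(2N)$ as fixed in \eqref{c1}, \textbf{Condition 1} makes the integers $\mathrm{loc}_p=(\alpha_p+2Nc_1d_p)_N$ pairwise distinct modulo $N$: two paths of equal delay differ in $\mathrm{loc}$ by a nonzero Doppler gap in $[-2\alpha_{\max},2\alpha_{\max}]$, and two paths of unequal delay differ by an amount in $\{1,\ldots,(d_{\max}+1)(2\alpha_{\max}+1)-1\}$, all strictly inside $(-N,N)$. Hence, by \eqref{H_p}, every $\mathbf{H}_p$ and every $\hat{\mathbf{H}}_p$ is a monomial matrix and the $P$ supports are mutually disjoint. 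For (ii), since the pre-chirp parameters affect only the nonzero values and not their locations, a direct rewriting of $\xi_{(p,\bar{m},m)}$ in \eqref{eqXI} gives $\mathbf{H}_p=e^{\textsf{j}2\pi c_1 d_p^2}\,\bm{\Lambda}_{c_2}\,\bm{\Pi}^{-\mathrm{loc}_p}\,\mathbf{D}_{d_p}\,\bm{\Lambda}_{c_2}^{-1}$ with $\mathbf{D}_{d_p}=\operatorname{diag}\big(e^{-\textsf{j}2\pi m d_p/N}\big)_m$, and likewise $\hat{\mathbf{H}}_p$ with $\hat{\bm{\Lambda}}_{c_2}$ in place of $\bm{\Lambda}_{c_2}$.

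For (iii), first take the sub-case where the two pre-chirp patterns coincide, so $\hat{\bm{\Lambda}}_{c_2}=\bm{\Lambda}_{c_2}$ and the $p$-th column of $\mathbf{\Phi}(\bar{\delta})$ is just $\mathbf{H}_p\bar{\mathbf{x}}$ with $\bar{\mathbf{x}}=\hat{\mathbf{x}}-\mathbf{x}\neq\mathbf{0}$. Left-multiplying $\mathbf{\Phi}(\bar{\delta})$ by the unitary $\mathbf{F}\bm{\Lambda}_{c_2}^{-1}$ (rank-preserving) and using $\mathbf{F}\bm{\Pi}^{-\mathrm{loc}_p}=\operatorname{diag}\big(e^{\textsf{j}2\pi k\mathrm{loc}_p/N}\big)_k\mathbf{F}$ and $\mathbf{F}\mathbf{D}_{d_p}=\mathbf{S}_{d_p}\mathbf{F}$ (with $\mathbf{S}_{d_p}$ the circular shift by $d_p$), the $p$-th column becomes $e^{\textsf{j}2\pi c_1 d_p^2}$ times the Vandermonde vector $\big(e^{\textsf{j}2\pi k\mathrm{loc}_p/N}\big)_k$ Hadamard-multiplied by the $d_p$-circularly-shifted DFT of $\bm{\Lambda}_{c_2}^{-1}\bar{\mathbf{x}}$. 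Full column rank then reduces to a null-space statement: $\sum_p v_p e^{\textsf{j}2\pi c_1 d_p^2}e^{\textsf{j}2\pi k\mathrm{loc}_p/N}\big[\mathbf{F}\bm{\Lambda}_{c_2}^{-1}\bar{\mathbf{x}}\big]_{(k+d_p)_N}=0$ for all $k$ must force $\mathbf{v}=\mathbf{0}$. The distinct nodes $e^{\textsf{j}2\pi\mathrm{loc}_p/N}$ furnish a Vandermonde backbone, so the only remaining point is that $\mathbf{F}\bm{\Lambda}_{c_2}^{-1}\bar{\mathbf{x}}$ is not supported on too few coordinates; this is the first use of \textbf{Condition 2}, because the realizations in $\mathcal{P}_c$ being irrational, the phases $e^{\textsf{j}2\pi c_{2,m}m^2}$ are not roots of unity and admit no nontrivial relation with algebraic coefficients, whereas the entries of $\bm{\Lambda}_{c_2}^{-1}\bar{\mathbf{x}}$ are products of such a phase with one of the finitely many algebraic constellation differences, so the DFT of this nonzero vector cannot degenerate and the associated $P\times P$ minor is nonzero.

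The remaining, genuinely harder sub-case is when the two pre-chirp patterns differ, so $\hat{\bm{\Lambda}}_{c_2}\neq\bm{\Lambda}_{c_2}$ and the $\mathbf{x}$-part of each column (carrying phases $e^{\textsf{j}2\pi c_{2,m}m^2}$) and the $\hat{\mathbf{x}}$-part (carrying $e^{\textsf{j}2\pi\hat{c}_{2,m}m^2}$) no longer combine into a single chirp-modulated difference, so the clean Vandermonde reduction breaks. I would handle it by the same contradiction scheme — assume a nontrivial column dependence $\sum_p v_p(\hat{\mathbf{H}}_p\hat{\mathbf{x}}-\mathbf{H}_p\mathbf{x})=\mathbf{0}$, read off the $N$ scalar equations, use the pairwise-distinct $\mathrm{loc}_p$ to isolate the contribution of a single fixed input coordinate, and then use that a transcendental phase $e^{\textsf{j}2\pi\hat{c}_{2,m}m^2}$ from one realization in $\mathcal{P}_c$ cannot be cancelled against the phases $e^{\textsf{j}2\pi c_{2,m'}m'^2}$ from other realizations, against roots of unity, or against the finitely many algebraic constellation differences. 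This is precisely the main obstacle: ruling out a ``conspiracy'' in which an $M$-ary symbol error is exactly offset by an index-bit (pre-chirp pattern) error so as to collapse the rank, which forces one to track both families of chirp phases simultaneously. Once the null space is shown to be trivial in both sub-cases, $\min\mathrm{rank}\big(\mathbf{\Phi}(\bar{\delta})\big)=P$, and \eqref{order} yields diversity order $\mu=P$, i.e., full diversity.
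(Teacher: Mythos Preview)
Your skeleton is right --- reduce to $\mathrm{rank}\big(\hat{\mathbf{\Phi}}(\hat{\mathbf{x}})-\mathbf{\Phi}(\mathbf{x})\big)=P$, extract pairwise-distinct $\mathrm{loc}_p$ from Condition~1, then kill column dependencies via Condition~2 --- and the paper follows the same outline. But two steps in your execution do not go through as written.

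First, the ``Vandermonde backbone'' is not actually a Vandermonde system. After your conjugation the $(k,p)$ entry is $e^{\textsf{j}2\pi c_1 d_p^2}\,e^{\textsf{j}2\pi k\,\mathrm{loc}_p/N}\,[\mathbf{F}\bm{\Lambda}_{c_2}^{-1}\bar{\mathbf{x}}]_{(k+d_p)_N}$, and the last factor couples $k$ and $p$ through $d_p$. Since distinct paths need not share $d_p$, this cannot be factored as $\mathrm{diag}(\cdot)\times(\text{Vandermonde})\times\mathrm{diag}(\cdot)$, and ``the DFT does not degenerate'' is neither sufficient nor the right reduction for the minor you want. The paper does not attempt any unitary conjugation; it works directly with the raw column entries $\boldsymbol{\gamma}_p[n]=\hat{H}_p[n,\mathrm{loc}_p{+}n]\hat{x}_{\mathrm{loc}_p+n}-H_p[n,\mathrm{loc}_p{+}n]x_{\mathrm{loc}_p+n}$ and handles both sub-cases at once by writing $\hat{H}_p=\ell\cdot H_p$ entrywise.

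Second, and more seriously, you are asking mere irrationality of the $c_2^{(i)}$ to deliver that the phases $e^{\textsf{j}2\pi c_{2,m}m^2}$ ``admit no nontrivial relation with algebraic coefficients.'' Irrationality only prevents a \emph{single} such phase from being a root of unity; multiplicative independence of several of them over the algebraic numbers is a Lindemann--Weierstrass type statement that needs at minimum $\mathbb{Q}$-linear independence of the exponents, which Condition~2 does not provide (e.g.\ $c_2^{(1)}=\sqrt{2}$ and $c_2^{(2)}=4\sqrt{2}$ give identical phases at $m=2$ and $m=1$). The paper's mechanism is much narrower: assuming $\sum_p\beta_p\boldsymbol{\gamma}_p=\mathbf{0}$ with $\beta_b\neq 0$, it solves the $n$-th scalar equation for the constellation-error factor $\rho_{n,b}$ and observes that, since $\rho_{n,b}$ is chirp-free, the phase of each $\beta_p/\beta_b$ must absorb the specific irrational block $2\pi\big(c_{2,\mathrm{loc}_b}\mathrm{loc}_b^2-c_{2,\mathrm{loc}_p+n}(\mathrm{loc}_p{+}n)^2\big)$; repeating for a second nonzero $\beta_a$ and multiplying $(\beta_a/\beta_b)(\beta_b/\beta_a)=1$ yields \emph{one} scalar identity whose left-hand phase is an irrational multiple of $2\pi$. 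That single targeted phase equation --- not any global transcendence claim --- is the contradiction, and it is the missing ingredient in both of your sub-cases.
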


\begin{proof}	
See Appendix~\ref{ApA}.
\end{proof}

\subsection{Spectral Efficiency Analysis}\label{S4.3}

This subsection examines the SE of the proposed AFDM-PIM scheme. Classic AFDM and IM-assisted AFDM (referred to as AFDM-IM) are utilized as benchmarks to highlight the superiority of the proposed approach.

The SEs of the classic AFDM \cite{bemani2023affine} and AFDM-IM \cite{10342712} are given by 
\begin{equation} 
	S_{\mathrm{AFDM}}=\log _2(M)
\end{equation}
and    
\begin{equation} 
	S_{\mathrm{AFDM}-\mathrm{IM}}=\frac{\left\lfloor\log _2\binom{N_c}{a}\right\rfloor+a \log _2(M)}{N_c} ,
\end{equation}
where $N_c$ and $a$ represent the numbers of subcarriers in each group and activated subcarriers at each transmission, respectively, for the AFDM-IM scheme.

Based on the description in Subsection~\ref{S3.2}, the SE of the proposed AFDM-PIM can be calculated as
\begin{equation} 
	S_{\mathrm{AFDM}-\mathrm{PIM}}= \frac{\left\lfloor \log_2(N_c!) \right\rfloor}{N_c} + \log _2(M).
\end{equation}

As can be seen from the above analysis, the proposed AFDM-PIM demonstrates a higher SE than AFDM, due to the presence of index modulation. Furthermore, in contrast to AFDM-IM, which employs inactive subcarriers at each transmission, AFDM-PIM exhibits a higher SE, attributed to the capacity of all subcarriers to carry modulated data. In summary, compared with classic methods, the proposed AFDM-PIM scheme offers an enhancement in terms of SE, without incurring any additional hardware cost.

\section{Parameter Optimization}\label{S5}
	
	In this section, we establish an optimization problem for the optimal $c_2$ alphabet design, where the PSO algorithm is employed to enhance the BER performance.
	
\subsection{Problem Formulation}\label{S5.1}

	For tractable analysis, the pre-chirp alphabet design is investigated based on the optimal BER detector \cite{7936676}, i.e., the ML detector based on the signal model (\ref{inputoutput29}), expressed as
\begin{equation}\label{eqOPorig} 
	\big(\hat{\mathbf{x}},\hat{\mathrm{P}}_{c_2}\big) =  \arg \min_{\forall \mathbf{x},\mathrm{P}_{c_2}}\|\mathbf{y}-\mathbf{\Phi}(\mathbf{x})\mathbf{h}\|^2.
\end{equation}
To achieve the optimal $c_2$ alphabet, the minimum Euclidean distance (MED) between different realizations of $\mathbf{\Phi}(\mathbf{x})$ should be maximized. The Euclidean distance between two realizations of $\mathbf{\Phi}(\mathbf{x})$ is formulated as
\begin{equation}\label{Okj_o}  
	O_{k,j}(\mathcal{P}_c) = \sum_{\mathbf{x}^{\prime}\!, \mathbf{x}}\sum_{r = 1 }^{\mathcal{R}}\left\|(\mathbf{\Phi}_k^r (\mathbf{x}^{\prime})-\mathbf{\Phi}_j^r(\mathbf{x}))\right\|_{\mathrm{F}}^2,
\end{equation}
where $j$ and $k$ represent the indices of $\mathcal{S}_{p}$, and the corresponding PCPGs are expressed as $\mathrm{P}_{c_2}^{(j)}\! =\! [c_{2,1}, c_{2,2},\ldots, c_{2, N}]$ and $\mathrm{P}_{c_2}^{(k)}\! = \![c_{2,1}^{\prime}, c_{2,2}^{\prime}, \ldots, c_{2, N}^{\prime}]$, respectively, while $r$ is the index of delay and Doppler selection. Specifically, each $r$ corresponds to a specific combination of $P$ paths with different delays and Doppler shifts under doubly dispersive channels with a maximum delay $d_{\text{max}}$ and a normalized Doppler shift $\alpha_{\text{max}}$, and the maximum value of $r$ is $\mathcal{R} = \binom{P_{\mathrm{max}}}{P}$, where $P_{\mathrm{max}}\! =\! (d_{\mathrm{max}}+1)(2\alpha_{\mathrm{max}}+1)$.
	
	According to Theorem~\ref{T3} and the periodicity brought by $2\pi$ in (\ref{eq-c2}), $c_2$ is an irrational number with a principal value range of [0,\, 1]. To enhance the BER performance of the proposed AFDM-PIM, we formulate the following problem to maximize the MED with optimal $c_2$ alphabet design: 
\begin{subequations}\label{problem_origin} 
	\begin{align}
		& \max_{\mathcal{P}_c} \min_{k,j \in [1,2^{b_2}]} O_{k,j}(\mathcal{P}_c) , \label{problem1} \\
		& \text{ s.t. }\quad j \ne k, \label{constraint1} \\
		&\hspace{.6cm} \quad c_2 \in [0,1], \label{constraint2} \\
		&\hspace{.6cm} \quad c_2 \in \mathbb{I} \label{constraint3}.
	\end{align}
\end{subequations}	
To derive the objective function, consider the elements of $\mathbf{x}^{\prime}$ and $\mathbf{x}$, $x_{\mathrm{loc}_p+n}^{\prime}$ and $x_{\mathrm{loc}_p+n}$, which can be expressed as $x_{\mathrm{loc}_p+n}^{\prime}\! =\! r_{p,n} e^{\textsf{j}\vartheta_{p,n}}$ and $x_{\mathrm{loc}_p+n}\! =\! r_{p,n}^{\prime} e^{\textsf{j}\vartheta_{p,n}^{\prime}}$. Define
\begin{equation}\label{theta_define} 
	\left\{\begin{array}{l}
		\theta_n = 2 \pi\left(c_{2,\operatorname{loc}_p+n}\left(\operatorname{loc}_p+n\right)^2-c_{2, n} n^2\right) , \\
		\theta_n^{\prime} = 2 \pi\left(c_{2,\operatorname{loc}_p+n}^{\prime}\left(\operatorname{loc}_p+n\right)^2-c_{2, n}^{\prime} n^2\right).
	\end{array}\right.
\end{equation}
and
\begin{align}\label{eq-Fn} 
  & F_{p,n} \!=\! r_{p,n}^2 \!+\! {r_{p,n}^{\prime \; 2}} \! - 2 r_{p,n}r_{p,n}^{\prime} \cos \left(\theta_n^{\prime} \!-\! \theta_n + \psi_{p,n} \right) ,
\end{align}
where $\psi_{p,n}\! =\! \vartheta_{p,n}^{\prime}\! -\! \vartheta_{p,n}$. Then the objective function $O_{k,j}(\mathcal{P}_c)$ can be expressed as
\begin{equation}\label{Okj} 
	O_{k,j}(\mathcal{P}_c) \! = \! \sum_{\mathbf{x}^{\prime}\!, \mathbf{x}} \sum_{r=1}^{\mathcal{R}} \sum_{p=1}^P \! \sum_{n=0}^{N-1} F_{p,n}.
\end{equation} 
The derivation of (\ref{Okj}) is given in Appendix~\ref{ApB}.

\subsection{Problem Transformation}\label{S5.2}
	
\subsubsection{Transformation of $\mathbf{x}^{\prime}$ and $\mathbf{x}$}

	Consider the constant modulus constellation using phase shift keying (PSK) as an example. Since $x_{\operatorname{loc}_p+n}^{\prime}$ and $x_{\operatorname{loc}_p+n}$ take values from the $M$-PSK constellation, $r_{p,n}\! =\! r_{p,n}^{\prime}\! = 1$, and $\psi_{p,n}$ is given by 
\begin{equation}\label{eq-psi} 
	\psi_{p,n} = \frac{2\pi k}M,  k = -(M-1),\ldots,-1,0,1,\ldots,M-1,
\end{equation}
where each possible value is assigned with equal probability, i.e., $\psi_{p,n}$ follows a discrete uniform distribution. For the case where $\mathbf{x}^{\prime}\! \ne\! \mathbf{x}$, $O_{k,j}(\mathcal{P}_c)$ can be expressed as 
\begin{align}\label{eqTMx} 
	& O_{k,j}^{(\mathbf{x}^{\prime} \ne \mathbf{x})}(\mathcal{P}_c) =  \sum_{\mathbf{x}^{\prime} \ne \mathbf{x}}\sum_{r=1}^{\mathcal{R}}\sum_{p=1}^P\sum_{n=0}^{N-1}\left( - \cos\left(\psi_{p,n}+\theta_n^{\prime}-\theta_n \right) \right) \nonumber \\
	&\hspace*{3mm} = \sum_{r=1}^{\mathcal{R}} \sum_{p=1}^P\sum_{n=0}^{N-1}\!\! \bigg(\! \sum_{k=1-M}^{M-1}\!\! \cos\Big(\frac{2\pi k}M + \theta_n^{\prime}-\theta_n \Big)\! \bigg) = 0.
\end{align}
Therefore, for the objective function $O_{k,j}(\mathcal{P}_c)$, it is sufficient to consider the case of $\mathbf{x}^{\prime}\! =\! \mathbf{x}$. In this case, $\psi_{p,n}\! =\! 0$ for all $n\! =\! 0,1,\ldots,N-1$, and $O_{k,j}(\mathcal{P}_c)$ can be expressed as
\begin{equation}\label{O_PSK} 
	O_{k,j}(\mathcal{P}_c) = \sum_{r=1}^{\mathcal{R}}\sum_{p=1}^P\sum_{n=0}^{N-1}\left(1-\cos\left(\theta_n^{\prime}-\theta_n\right)\right). 
\end{equation}

For non-constant modulus constellations, take quadrature amplitude modulation (QAM) as an example. Then $x_{\operatorname{loc}_p+n}^{\prime}$ and $x_{\operatorname{loc}_p+n}$ take values from the $M$-QAM constellation and the constellation diagram is symmetric with respect to the origin. Hence, for each pair of $x_{\operatorname{loc}_p+n}^{\prime}$ and $x_{\operatorname{loc}_p+n}$  with phase difference $\psi_{p,n} = \gamma$, there exist a pair of corresponding constellation points with phase difference $\psi_{\hat{p},\hat{n}} = \gamma + \pi$. Therefore, for the case of $\mathbf{x}^{\prime} \ne \mathbf{x}$, we can obtain
\begin{align}\label{O_QAM_ne} 
	& \sum_{\mathbf{x}^{\prime} \ne \mathbf{x}}\sum_{r=1}^{\mathcal{R}}\sum_{p=1}^P\sum_{n=0}^{N-1}\left(r_{p,n}r_{p,n}^{\prime} \cos \left(\theta_n^{\prime} \!-\! \theta_n + \psi_{p,n} \right)\right) \nonumber \\
	& = \sum_{r=1}^{\mathcal{R}} \sum_{p=1}^P\sum_{n=0}^{N-1}\! \Big(\! \sum_{\psi_{p,n}}\!\! \cos\left(\gamma \!+\! \theta_n^{\prime} \!-\! \theta_n \right)\! + \!\cos\left(\gamma \!+\! \pi \!+\! \theta_n^{\prime}\!-\!\theta_n \right)\!\Big) \nonumber \\
	& = 0.
\end{align}
Consequently, it is sufficient to focus on the case of $\mathbf{x}^{\prime}\! =\! \mathbf{x}$ for the objective function $O_{k,j}(\mathcal{P}_c)$. Under this condition, $r_{p,n}\! =\! r_{p,n}^{\prime}$ and $\psi_{p,n}\! =\! 0$, and $O_{k,j}(\mathcal{P}_c)$ can be expressed as
\begin{align}\label{O_QAM} 
	& O_{k,j}(\mathcal{P}_c) \!=\! \sum_{\mathbf{x}^{\prime}, \mathbf{x}}\sum_{r=1}^{\mathcal{R}} \sum_{p=1}^P\sum_{n=0}^{N-1} 2r_{p,n}^2\left(1-\cos\left(\theta_n^{\prime}-\theta_n\right)\right),
\end{align}
where signal amplitudes $r_{p,n}$ are unrelated to optimization.

In summary, for both constant modulus and non-constant modulus constellations, the optimization problem (\ref{problem_origin}) can be transformed into
\begin{subequations}\label{problem_x} 
	\begin{align}
		& \max_{\mathcal{P}_c} \min_{k,j \in [1,2^{b_2}]} \sum_{r=1}^{\mathcal{R}}\sum_{p=1}^P\sum_{n=0}^{N-1}\left(1-\cos\left(\theta_n^{\prime}-\theta_n\right)\right) , \label{problem_xa} \\
		& \text{ s.t. }\quad (\ref{constraint1}), (\ref{constraint2}), (\ref{constraint3}).
	\end{align}
\end{subequations}

\subsubsection{Transformation of $\mathrm{P}_{c_2}^{(j)}$ and $\mathrm{P}_{c_2}^{(k)}$} 
	
	In the problem (\ref{problem_x}), each pair of indices of PCPGs, i.e., each pair of $j,k$, corresponds to a specific set of $\mathrm{P}_{c_2}^{(j)}$ and $\mathrm{P}_{c_2}^{(k)}$. By substituting (\ref{theta_define}) into the objective function of (\ref{problem_x}), $O_{k,j}(\mathcal{P}_c)$ becomes 
\begin{equation}\label{O54} 
	O_{k,j}(\mathcal{P}_c) = \sum_{r=1}^{\mathcal{R}}\sum_{p=1}^P\sum_{n=0}^{N-1} \left(1-O^{\prime}\right),
\end{equation} 
where 
\begin{equation}\label{Delta56} 
	\left\{
		\begin{aligned}
			& O^{\prime} \!=\! \cos \! \left(2\pi  \left(\Delta c_{2,\operatorname{loc}_p+n} (\operatorname{loc}_p+n)^2 \!-\! \Delta c_{2,n}n^2\right)\right), \\
			& \Delta c_{2,\operatorname{loc}_p+n} = c_{2,\operatorname{loc}_p+n}^{'}-c_{2,\operatorname{loc}_p+n},\\
			& \Delta c_{2,n} = c_{2,n}^{'}-c_{2,n}.
		\end{aligned}
	\right.
\end{equation}   
It can be observed that in the case of $j \ne k$, i.e., $\mathrm{P}_{c_2}^{(j)} \ne \mathrm{P}_{c_2}^{(k)}$, as the discrepancy between $\mathrm{P}_{c_2}^{(j)}$ and $\mathrm{P}_{c_2}^{(k)}$ diminishes, the number of zero values in $O^{\prime}$ increases. Therefore, for the problem (\ref{problem_x}), it is sufficient to consider the case where $\big\|\mathrm{P}_{c_2}^{(j)} - \mathrm{P}_{c_2}^{(k)}\big\|_{0}=2$. Hence, the optimization problem can be further expressed as   
\begin{subequations}\label{problem_final} 
	\begin{align}
		\max_{\mathcal{P}_c} \min_{k,j} & \sum_{r=1}^{\mathcal{R}} \sum_{p=1}^P \sum_{n=0}^{N-1} \left(1 - \cos\left(\theta_n^{\prime} - \theta_n\right)\right), \label{problem_finala} \\
		\text{s.t.} \quad & j \ne k, \label{problem_finalb} \\
		& \big\|\mathrm{P}_{c_2}^{(j)} - \mathrm{P}_{c_2}^{(k)}\big\|_{0}=2, \label{problem_finalc} \\
		& c_2 \in (0,\, 1), \\
		& c_2 \in \mathbb{I}.
	\end{align}
\end{subequations}

\subsection{Problem Solver}\label{S5.3}

	The optimization (\ref{problem_final}) is a non-convex problem, and it is challenging to obtain a global optimal solution. To this end, the PSO-based algorithm is invoked to obtain a suboptimal solution, attributed to its rapid convergence and exemplary global searching capabilities.

\begin{algorithm}[t]
\caption{PSO-Based Algorithm for $c_2$ Alphabet Design}
\label{Alg1}
\begin{algorithmic}[1] %
	\Require $N_p$, $\varpi$, $\varrho_{global}$, $\varrho_{local}$, $v_{\max}$, $I_{\max}$, $\lambda$ and all other parameters required to evaluate fitness function;
	\Ensure $\mathcal{P}_c$;
	\State Set $i_{ter}\! =\! 0$ and initialize $N_p$ particles with positions $\mathcal{P}^{(0)}$ and zero velocities $\mathcal{V}^{(0)}$;
	\State Calculate fitness values of all particles by (\ref{utility-function}), $\mathcal{F}_P\Big(\bm{p}_{n_p}^{(0)}\Big)$, $n_p = 1, 2, \ldots, N_p$;
	\State Initialize local optimal position of each particle $\bm{p}_{n_p, local} = \bm{p}_{n_p}^{(0)}$, calculate global optimal position $\bm{p}_{global} = \arg \max\limits_{1\le n_p\le N_p}\big({\cal F}_p(\bm{p}_{n_p}^{(0)})\big)$;
	\While{$i_{ter} \le I_{\mathrm{max}}$}
		\For{$n_p = 1$ to $N_p$}
			\State Update velocity $\bm{v}_{n_p}^{(i_{ter})}$ according to (\ref{update-velocity});
			\For{$i = 1$ to $\lambda$}
			  \If{$\bm{v}_{n_p}^{(i_{ter})}[i] > {v}_{\mathrm{max}}$}
			    \State $\bm{v}_{n_p}^{(i_{ter})}[i] \gets {v}_{\max}$;
			  \ElsIf{$\bm{v}_{n_p}^{(i_{ter})}[i] < -{v}_{\max}$}
			    \State $\bm{v}_{n_p}^{(i_{ter})}[i] \gets -{v}_{\max}$
			  \EndIf
		  \EndFor
		  	\State Update position $\bm{p}_{n_p}^{(i_{ter})}$ based on (\ref{update position});
			\State Calculate fitness value $\mathcal{F}_P\Big(\bm{p}_{n_p}^{(i_{ter})}\Big)$ using (\ref{utility-function});
			\If{$\mathcal{F}_P\Big(\bm{p}_{n_p}^{(i_{ter})}\Big) > \mathcal{F}_P\Big(\bm{p}_{n_p, local}\Big)$}
			  \State $\bm{p}_{n_p, local} \gets \bm{p}_{n_p}^{(i_{ter})}$;
			\EndIf
			\If{$\mathcal{F}_P\Big(\bm{p}_{n_p}^{(i_{ter})}\Big) > \mathcal{F}_P\Big(\bm{p}_{global}\Big)$}
			  \State $\bm{p}_{global} \gets \bm{p}_{n_p}^{(i_{ter})}$;
			\EndIf
		\EndFor
		\State $i_{ter} \gets i_{ter} + 1$;
	\EndWhile
	\State Obtain global optimal position $\mathcal{P}_c \gets \bm{p}_{global}$;
	\State \textbf{return} $\mathcal{P}_c$.
\end{algorithmic}
\end{algorithm}
	
	First, a population of $N_p$ particles with velocities and positions are initialized. The velocities of the particles are represented by $\mathcal{V}^{(0)}\! =\! \{\bm{v}_1^{(0)},\bm{v}_2^{(0)}\ldots,\bm{v}_{N_p}^{(0)}\}$, which is indicative of the extent of change occurring during the iterative process. The positions of the particles are denoted by $\mathcal{P}^{(0)}\! =\! \{\bm{p}_1^{(0)},\bm{p}_2^{(0)},\ldots,\bm{p}_{N_p}^{(0)}\}$, where each position represents a potential solution for the pre-chirp alphabet, i.e.,
\begin{equation}\label{position} 
	\bm{p}_{n_p}^{(0)} = \mathcal{P}_{c,n_p}^{(0)} = \left\{{c_{2,{n_p}}^{(0,1)}, c_{2,{n_p}}^{(0,2)}, \ldots, c_{2,{n_p}}^{(0,\lambda )}} \right\},
\end{equation}
in which $n_p$ denotes the index of particle. As an initial solution, the first particle $\bm{p}_{1}^{(0)}$ is initialized as a heuristic pre-chirp alphabet, where the elements in $\mathcal{P}_c$ are evenly distributed within the interval [0,\, 1], and the remaining particles are randomly initialized.

	Subsequently, the fitness value of each particle is evaluated per the specified utility function. In light of the constraints imposed by (\ref{problem_finalb}) and (\ref{problem_finalc}), a brick wall penalty factor is introduced, and the utility function in the $i_{ter} $-th iteration is defined as
\begin{equation}\label{utility-function} 
	\mathcal{F}_P\Big(\bm{p}_{n_p}^{(i_{ter})}\Big) = \begin{cases}\epsilon, &\text{if}\hspace{0.2cm} \bm{p}_{n_p}^{(i_{ter})} \hspace{0.1cm} \text{is feasible,}\\-1,&\text{otherwise,}\end{cases}
\end{equation}        
where $\epsilon\! =\! \min_{k,j} O_{k,j}(\mathcal{P}_c)$ and $O_{k,j}(\mathcal{P}_c)$ is calculated by (\ref{O54}). Then the particle with the greatest fitness value is considered to be the initial global optimal position $\bm{p}_{global}$, and the local optimal position $\bm{p}_{n_p, local}$ of each particle is initialized as $\bm{p}_{n_p}^{(0)}$.
	
	Each particle conveys its local optimal position to other particles during the iteration. The velocity and position of each particle are updated according to
\begin{align} 
	\bm{v}_{n_p}^{(i_{ter})} =& \varpi \bm{v}_{n_p}^{(i_{ter}-1)} + r_1 \varrho_{local} \big(\bm{p}_{{n_p},local} -\bm{p}_{n_p}^{(i_{ter}-1)}\big) \nonumber \\
	& +r_2 \varrho_{global}\big(\bm{p}_{global} - \bm{p}_{n_p}^{(i_{ter}-1)}\big), \label{update-velocity} \\
	\bm{p}_{n_p}^{(i_{ter})} =& \bm{p}_{n_p}^{(i_{ter}-1)} + \bm{v}_{n_p}^{(i_{ter})} , \label{update position}
\end{align}                                                               
respectively, where $\varpi$ is the inertia weight, $\varrho_{global}$ and $\varrho_{local}$ denote the global and local updating coefficients, respectively, while $r_1$ and $r_2$ are two random values drawn within the interval [0,\, 1]. The particle velocity is constrained to the range between $-{v}_{\max}$ and ${v}_{\max}$. Afterward, the fitness values of all the particles are evaluated, and the global and local optimal positions, $\bm{p}_{global}$ and $\big\{\bm{p}_{n_p, local}\big\}_{n_p=1}^{N_p}$, are updated. The above update process is repeated until the maximum number $I_{\mathrm{max}}$ of iterations is reached. The particle with the highest fitness value corresponds to the optimization result $\mathcal{P}_c$. The procedure of the proposed PSO-based algorithm is outlined in Algorithm~\ref{Alg1}.
	
\section{Simulation Results and Analysis}\label{S6}
	
	In this section, we carry out simulations to evaluate the performance of the proposed AFDM-PIM schemes, and conduct the BER performance comparison between the proposed AFDM-PIM scheme and other existing counterparts. The accuracy of the PEP-based theoretical derivation is also investigated in comparison with the Monte Carlo simulation results.
  In simulations, the carrier frequency is set to $f_{c}\! =\! 8$\,GHz, and the subcarrier spacing in the DAF domain is set to $f_{s}\! =\! 1.5$\,kHz. Two distinct simulation scenarios are conducted, namely, when the full diversity condition (\ref{condition1}) is satisfied and when it is not satisfied. The maximum normalized Doppler shifts for these two scenarios are set to $\alpha_{\mathrm{max}}\! =\! 1$ and $\alpha_{\mathrm{max}}\! =\! 2$, corresponding to the high-speed scenarios with the maximum speed of mobile station $v_e\! =\! 202.5$\,km/h and the ultra high-speed scenarios with the maximum speed of mobile station $v_e\! =\! 405$\,km/h, respectively.\footnote{The relationship between speed and normalized Doppler shift is shown in Appendix~\ref{calc_doppler}.} The Doppler shift of each path is generated according to Jakes Doppler spectrum approach as $\alpha_{p}\! =\! \alpha_{\mathrm{max}} \cos(\theta_{p,d})$, where $\theta_{p,d}\! \in\! [-\pi,\,\pi]$ (for integer Doppler cases, the Doppler shift is $\lfloor \alpha_{\mathrm{max}} \cos(\theta_{p,d}) \rfloor$). We set the pertinent parameters of Algorithm~\ref{Alg1} as ${v}_{\max}\! =\! 0.05$, $\varpi\! =\! 0.5$, $N_p\! =\! 200$, $I_{\mathrm{max}}\! =\! 300$, $\varrho_{global}\! =\! 2$ and $\varrho_{local}\! =\! 2$.  The adopted simulation parameters are listed in Table~\ref{table:simulation_parameters}. Unless otherwise specified, the ML detector is employed for both the proposed AFDM-PIM and the classical benchmarks. 
	
\begin{table}[!t]
	\centering
	\caption{Simulation Parameters}
	\label{table:simulation_parameters} 
	\vspace*{-2mm}
	\resizebox{\linewidth}{!}{
	\begin{tabular}{|c|c|c|}
		\hline
		\textbf{Parameter} & \textbf{Description} & \textbf{Value} \\ \hline
		$c$ & Speed of light in free space & $3 \times 10^8$ m/s \\ \hline
		$f_c$ & Carrier frequency & 8 GHz \\ \hline
		$f_s$ & Subcarrier spacing & 1.5 kHz \\ \hline
		$v_e$ & Maximum speed of mobile station & 202.5/405\,km/h \\ \hline
		${v}_{\max}$ & Maximum particle velocity & 0.05 \\ \hline
		$\mathcal{P}_c$ & \makecell {Alphabet of $c_2$ realizations \\ for $\lambda$ = 2/3/4} & \makecell{\{0.20, 0.60\} \\  \{0.29, 0.62, 0.93\} \\ \{0.01, 0.20, 0.41, 0.80\} } \\ \hline
		$\varpi$ & Inertia weight & 0.5 \\ \hline
		$N_p$ & Number of particles & 200 \\ \hline
		$I_{\max}$ & Maximum iteration value & 300 \\ \hline
		$\varrho_{global}$ & Global updating coefficients & 2 \\ \hline
		$\varrho_{local}$ & Local updating coefficients & 2 \\ \hline
	\end{tabular}
	}
	\vspace*{-3mm}
\end{table}

\begin{figure}[!h] 
\vspace*{-4mm}
\centering
\includegraphics[width=3.7in]{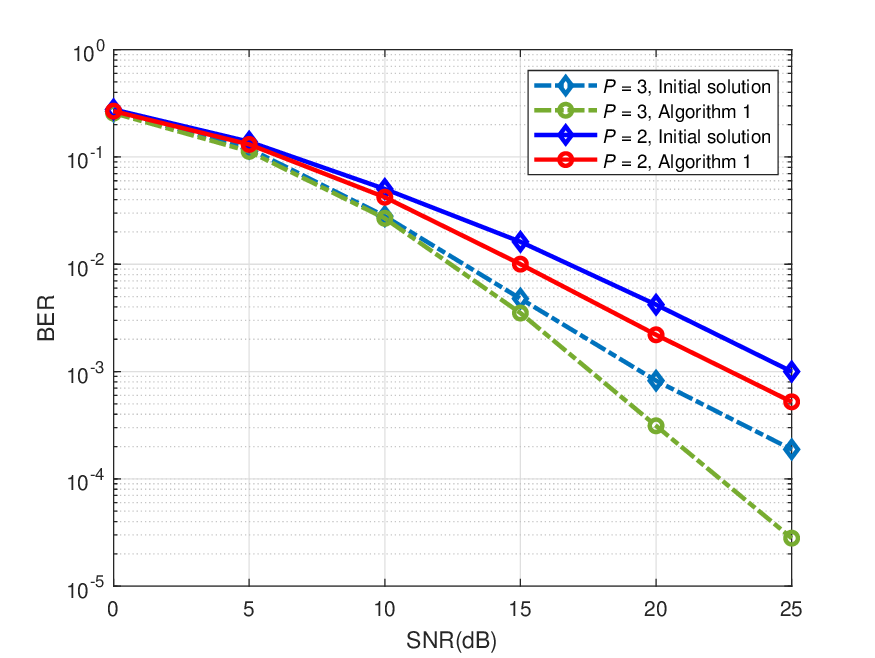}
\vspace*{-7mm}
\caption{BER Performance comparison between the $c_2$ alphabet designed by Algorithm~\ref{Alg1} and the initial $c_2$ solution, given two different numbers of paths.}
\label{vsPSO} 
\vspace*{-4mm}
\end{figure}

\subsection{Effectiveness of Algorithm~\ref{Alg1}}\label{S6.1}
	
  We first validate the effectiveness of the proposed $c_2$ alphabet design with Algorithm~\ref{Alg1}. Fig.~\ref{vsPSO} compares the BER performance achieved by the $c_2$ alphabet designed by Algorithm~\ref{Alg1} and that attained by the initial solution, namely, the first initial particle $\bm{p}_{1}^{(0)}$ for Algorithm~\ref{Alg1}, given two numbers of paths $P\! =\! 2$ and 3. Binary PSK (BPSK) is employed as the modulation scheme for the data bits, and the parameter settings are $(N, G, \lambda, d_{\mathrm{max}},\alpha_{\mathrm{max}})\! =\! (6, 2, 3, 1, 1)$. Since the full diversity order condition is satisfied, the BER performance improves with the increase in the number of paths. 
	Fig.~\ref{vsPSO} indicates that the $c_2$ alphabet designed by Algorithm~\ref{Alg1} exhibits superior performance by about 3\,dB in SNR than the initial solution at the BER level of $10^{-3}$, given $P\! =\!3$. The results of Fig.~\ref{vsPSO} demonstrate the effectiveness of Algorithm~\ref{Alg1}.

\begin{figure}[!t]
	\vspace*{-2mm}
	\centering
	\includegraphics[width=3.7in]{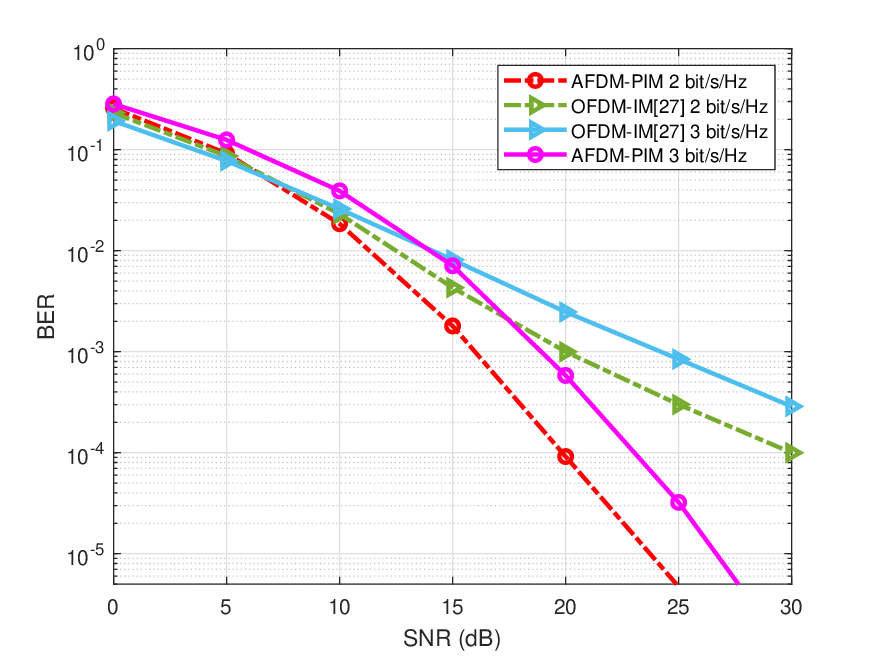}
	\vspace*{-7mm}
	\caption{Performance comparison between the proposed AFDM-PIM scheme and the OFDM-IM scheme given two different spectral efficiencies.}
	\label{vsOFDMIM} 
	\vspace*{-3mm}
\end{figure}

\begin{figure}[!b]
\vspace*{-6mm}
\centering
\includegraphics[width=3.7in]{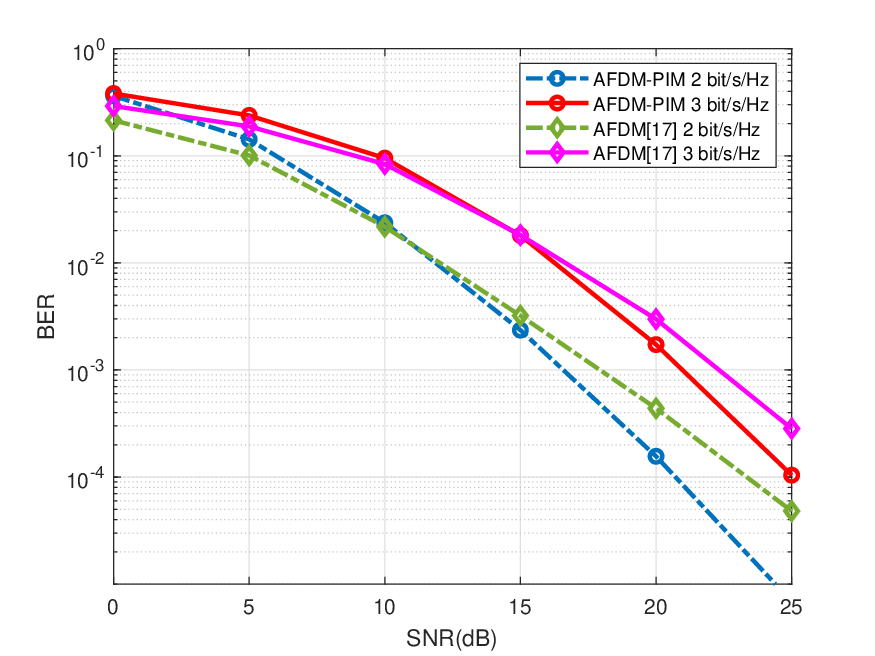}
\vspace*{-7mm}
\caption{Performance comparison between the proposed AFDM-PIM scheme and the classical AFDM scheme given two different spectral efficiencies.}
\label{vsAFDM} 
\vspace*{-1mm}
\end{figure}

\subsection{Performance Comparison with Existing Benchmarks}\label{S6.2}

	Fig.~\ref{vsOFDMIM} compares the BER performance achieved by the proposed AFDM-PIM and the OFDM-IM \cite{bacsar2013orthogonal} under a doubly dispersive channel with $P\! =\! 4$ and given two SEs of $2$ and $3$\,bit/s/Hz. For the proposed AFDM-PIM, $(N, G, \lambda, d_{\mathrm{max}}, \alpha_{\mathrm{max}})$ are set to $ (8, 2, 4, 0, 1)$, and BPSK and quadrature PSK (QPSK) are employed to reach the SEs of $2$ and $3$ bit/s/Hz, respectively. For the OFDM-IM scheme \cite{bacsar2013orthogonal}, each group comprises $n$ subcarriers, with $a$ subcarriers activated at each transmission. To reach the SEs of $2$ and $3$ bit/s/Hz, $(n,a)\! =\! (4,2)$ and $(n,a)\! =\! (8,7)$ are employed, respectively, and $8$-PSK is utilized. It can be seen from Fig.~\ref{vsOFDMIM} that at the BER level of $10^{-3}$, our AFDM-PIM attain the SNR gains of about 5\,dB and 4\,dB over the OFDM-IM for the cases of $\text{SE}\! =\! 3$\,bits/s/Hz and 2\,bits/s/Hz, respectively. This enhanced performance is attributable to the intrinsic advantage of AFDM-PIM, which can achieve full diversity order on doubly dispersive channels.

	Fig.~\ref{vsAFDM} compares the BER performance of our AFDM-PIM with that of the classical AFDM scheme \cite{bemani2023affine} under the same doubly dispersive channel with $P\! =\! 4$. The proposed AFDM-PIM employs BPSK and QPSK to achieve the SEs of $2$ and $3$\,bit/s/Hz, respectively. The other parameters are as $(N, G, \lambda, d_{\mathrm{max}}, \alpha_{\mathrm{max}})\!  =\! (8, 2, 4, 2, 2)$. To reach the same SE levels in the classical AFDM \cite{bemani2023affine}, the number of subcarriers is set to $8$, and QPSK and $8$-PSK are utilized correspondingly. 
	The both schemes do not satisfy the full diversity order condition. Specifically, the proposed AFDM-PIM does not satisfy Condition 1 in Theorem~\ref{T3}, while the AFDM does not satisfy Theorem~1 of \cite{bemani2023affine}.
	It can be seen from Fig.~\ref{vsAFDM} that the AFDM-PIM demonstrates an about $2$\,dB gain in the SNR compared to the AFDM scheme at the BER level of $10^{-3}$. This is because the AFDM-PIM allows lower order constellations to reach the same SE as the AFDM. The results of Fig.~\ref{vsAFDM} indicates that the AFDM-PIM offers a viable alternative for communication under doubly dispersive channels.
	
\begin{figure}[!h] 
\vspace*{-4mm}
\centering
\includegraphics[width=3.7in]{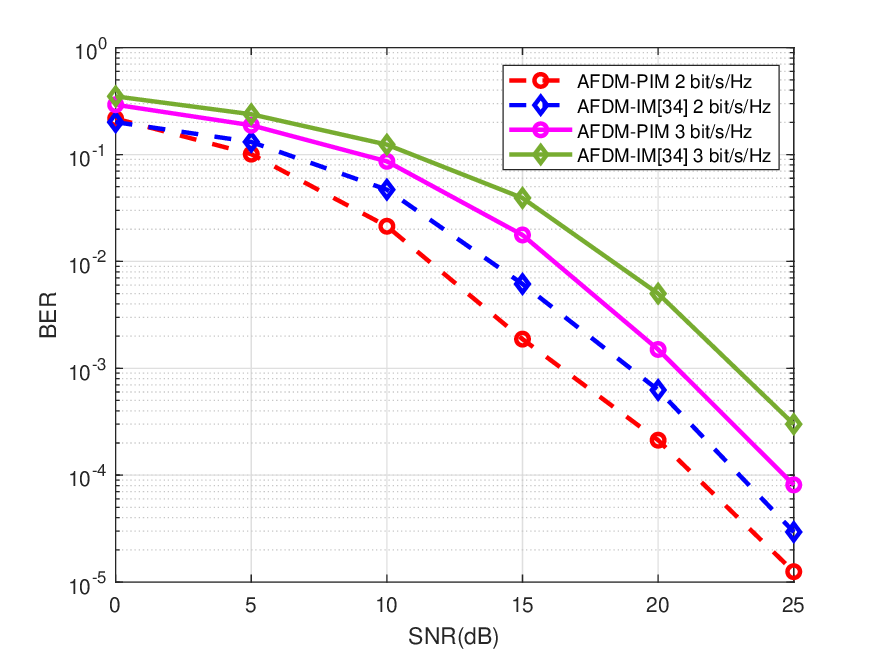}
\vspace*{-7mm}
\caption{Performance comparison between the proposed AFDM-PIM scheme and the AFDM-IM scheme given two different spectral efficiencies.}
\label{vsAFDMIM} 
\end{figure}
	
  In Fig.~\ref{vsAFDMIM}, we evaluate the BER performance of the proposed AFDM-PIM and the AFDM-IM \cite{10342712} under a doubly dispersive channel with the number of paths $P\! =\! 3$ and given two SEs of $2$ and $3$ bit/s/Hz.
	For the proposed AFDM-PIM, $(N, G, \lambda, d_{\mathrm{max}}, \alpha_{\mathrm{max}})$ are set to $ (8, 2, 4, 1, 2)$, and BPSK and QPSK are employed to reach the SEs of $2$ and $3$ bit/s/Hz, respectively. For the AFDM-IM scheme \cite{10342712}, each group comprises $n$ subcarriers, with $a$ subcarriers activated at each transmission. To reach the same SE levels for the AFDM-IM, $(n,a)\! =\! (4,2)$ and $(n,a)\! =\! (8,7)$ are employed, respectively, and $8$-PSK is utilized. Note that the both schemes do not meet the full diversity order condition.
	It can be seen from Fig.~\ref{vsAFDMIM} that our AFDM-PIM scheme yields better BER performance than the AFDM-IM scheme, with an SNR gain of more than $2$\,dB at the BER level of $10^{-3}$. The enhanced performance is attributable to the capability of our AFDM-PIM to employ lower order constellations than the AFDM-PIM, while  reaching the same SE as the AFDM-IM.
	
\begin{figure}[!t]
\vspace*{-3mm}
\centering
\includegraphics[width=3.7in]{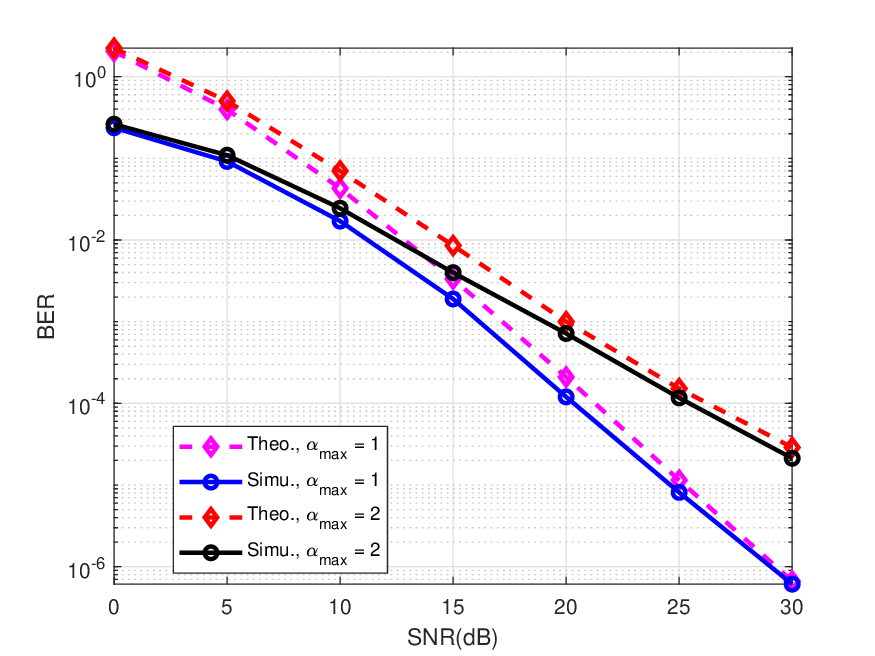}
\vspace*{-7mm}
\caption{Comparison of the theoretical ABEP upper bound and the simulated BER of the proposed AFDM-PIM given two different Doppler shifts.}
\label{vsTheo.}	
\vspace*{-3mm}
\end{figure}
		
\begin{figure*}[!tbp]\setcounter{equation}{69}
\vspace*{-1mm}
\begin{equation}\label{phi_a} 
	\begin{bmatrix}\hat H_a[0,\mathrm{loc}_a]\hat{x}_{\mathrm{loc}_a}-H_a[0,\mathrm{loc}_a]x_{\mathrm{loc}_a} \\ \hat H_a[1,(\mathrm{loc}_a+1)_N]\hat{x}_{(\mathrm{loc}_a +1)_N } - H_a[1,(\mathrm{loc}_a+1)_N]x_{(\mathrm{loc}_a + 1)_N} \\ \vdots \\ \hat H_a[N - 1,(\mathrm{loc}_a+N-1)_N]\hat{x}_{(\mathrm{loc}_a+N-1)_N}-H_a[N - 1,(\mathrm{loc}_a+N-1)_N]x_{(\mathrm{loc}_a+N-1)_N}\end{bmatrix},
\end{equation}
\begin{equation}\label{phi_b} 
	\begin{bmatrix}\hat H_b[0,\mathrm{loc}_b]\hat{x}_{\mathrm{loc}_b} - H_b[0,\mathrm{loc}_b]x_{\mathrm{loc}_b} \\ \hat H_b[1,(\mathrm{loc}_b+1)_N]\hat{x}_{(\mathrm{loc}_b +1)_N} -H_b[1,(\mathrm{loc}_b+1)_N]x_{(\mathrm{loc}_b + 1)_N} \\ \vdots \\ \hat H_b[N - 1,(\mathrm{loc}_b+N-1)_N]\hat{x}_{(\mathrm{loc}_b+N-1)_N}-H_b[N - 1,(\mathrm{loc}_b+N-1)_N]x_{(\mathrm{loc}_b+N-1)_N}\end{bmatrix},
\end{equation}
\hrulefill
\vspace*{-4mm}
\end{figure*}

\subsection{Accuracy of PEP-based Theoretical Analysis}\label{S6.3}
 	
	Fig.~\ref{vsTheo.} presents a comparison of the theoretical ABEP upper bound and the simulated BER of the proposed AFDM-PIM scheme over a doubly dispersive channel with the number of paths $P\! \!= 3$. The spectral efficiency of 1.5 bit/s/Hz is considered, $(N, G, \lambda, d_{\mathrm{max}})$ are set to $(4, 2, 2, 0)$ and BPSK is employed as the modulation scheme. Besides, the maximum normalized Doppler shifts are set to $\alpha_{\mathrm{max}}\! =\! 1$ and $\alpha_{\mathrm{max}}\! =\! 2$, corresponding to the high-speed and ultra high-speed scenarios, respectively. It can be seen from Fig.~\ref{vsTheo.} that in the both cases, the theoretical ABEP results deviate from the simulated results in the low-SNR region. This is because the theoretical ABEP upper bound is subject to several approximations, which inevitably becomes inaccurate when the noise is dominant. On the other hand, the theoretical results closely match the simulated BER curves at sufficiently high SNRs, which demonstrates the validity of the PEP analysis for the AFDM-PIM presented in Subsection~\ref{S4.1}.
	
\section{Conclusions}\label{S7}

	In this paper, we have proposed the novel AFDM-PIM scheme to combat time-frequency doubly selective channel fading. In our proposed scheme, the pre-chirp parameters on subcarriers are no longer fixed but are selected from a predefined alphabet. This enables additional bits to be transmitted through indexing the specific pre-chirp parameter values on subcarriers, thereby enhancing both spectrum and energy efficiency. The PEP-based theoretical BER upper bound for the proposed AFDM-PIM scheme with the ML detection has been derived and verified through simulations, and the full diversity order conditions of the proposed AFDM-PIM under doubly dispersive channels have been derived. Furthermore, we have presented a PSO-based optimization algorithm to design the pre-chirp parameter alphabet. Both analytical and simulation results have demonstrated that the proposed AFDM-PIM exhibits enhanced spectral efficiency and superior error performance in comparison to classical multi-carrier modulation schemes. 

\appendix

\subsection{Proof of Theorem~\ref{T3}}\label{ApA}	
 
\begin{proof}
	First, we show that Condition 1 is necessary for the AFDM-PIM to achieve the full diversity order. Assume that  Condition 1 is not met, i.e.,\setcounter{equation}{67}
\begin{align}\label{eqA1} 
  & P > (d_{\mathrm{max}}+1)(2\alpha_{\mathrm{max}}+1) .
\end{align}
Under this assumption, according to the definition of $\mathrm{loc}_p\! =\! (\alpha_{p} + 2 N c_1 d_p)_{N}$, the following situation must be true
\begin{equation}\label{eqA2} 
	\exists a,b\in[1,\ldots,P], a \ne b, ~\text{such that} ~\mathrm{loc}_a = \mathrm{loc}_b.
\end{equation}
The corresponding two columns in the matrix $\mathbf{\Phi}(\delta)$ can be expressed as (\ref{phi_a}) and (\ref{phi_b}) at the top of this page, where $\hat{H}_{(\cdot)}[\cdot]\hat{x}_{(\cdot)}$ and $H_{(\cdot)}[\cdot]x_{(\cdot)}$ represent the corresponding elements in $\hat{\mathbf{\Phi}} (\hat{\mathbf{x}})$ and $\mathbf{\Phi}(\mathbf{x})$, respectively.

\begin{figure*}[!bp]\setcounter{equation}{79}
\vspace*{-4mm}
\hrulefill
\begin{align}\label{F-norm}  
	& \left\|(\mathbf{\Phi}_k^r (\mathbf{x}^{\prime})-\mathbf{\Phi}_j^r(\mathbf{x}))\right\|_{\mathrm{F}}^2 =\sum_{p=1}^P \sum_{n=1}^{N} \left| H_p^{\prime}\left[n,\operatorname{loc}_p+n\right] x^{\prime}_{\operatorname{loc}_p+n} - H_p\left[n,\operatorname{loc}_p+n\right] x_{\operatorname{loc}_p+n} \right|^2 \nonumber \\
	&\hspace*{10mm} = \sum_{p=1}^P \sum_{n=1}^{N} \left|e^{\textsf{j} \frac{2 \pi}{N}\left[N c_{2,\operatorname{loc}_p+n}^{\prime}\left(\operatorname{loc}_p+n\right)^2-{N} c_{2,n}^{\prime} n^2\right]} x^{\prime}_{\operatorname{loc}_p+n}] - e^{\textsf{j} \frac{2 \pi}{{N}}\left[{N} c_{2,\operatorname{loc}_p+n}\left(\operatorname{loc}_p+n\right)^2-{N} c_{2,n} n^2\right]} x_{\operatorname{loc}_p+n}]\right|^2\!\! ,	\!
\end{align} 
\vspace*{-1mm}
\end{figure*}  

	Based on (\ref{H_p}), the positions of the non-zero entries in the matrices $\mathbf{H}_a,\hat{\mathbf{H}}_a,\mathbf{H}_b$ and $\hat{\mathbf{H}}_b$ are consistent. In the instances, such as when both $\hat{\mathbf{\Phi}} (\hat{\mathbf{x}})$ and $\mathbf{\Phi}(\mathbf{x})$ contain only a single non-zero element at the same position, (\ref{phi_a}) and (\ref{phi_b}) are linearly correlated. Therefore, $\mathbf{\Phi}(\delta)$ cannot be full rank, i.e., the assumption (\ref{eqA1}) is false.
		
	Besides, when $\mathbf{\Phi}(\delta)$ achieves the full diversity order, the channel paths with different delay values or distinct Doppler frequency shifts are distinguished within the DAFT domain, as illustrated in Fig.~\ref{fig3}, i.e., $ (d_{\mathrm{max}}+1)(2\alpha_{\mathrm{max}}+1) \le N$. Therefore, Condition 1 is a prerequisite for achieving the full diversity order.
		
	Next we prove that Condition 2 can ensure that $\mathbf{\Phi}(\delta)$ has the full rank. $\mathbf{\Phi}(\delta)$ can be expressed as $\mathbf{\Phi}(\delta)\! =\! \hat{\mathbf{\Phi}}(\hat{\mathbf{x}})\! -\! \mathbf{\Phi}(\mathbf{x})\! =\! [\bm{\gamma}_1, \ldots, \bm{\gamma}_p, \ldots, \bm{\gamma}_{P}]$,  where $\bm{\gamma}_p\! \in\! \mathbb{C}^{N\times 1}$ with the entries $\bm{\gamma}_p[n]\! =\! \hat H_p\left[n,\left(\operatorname{loc}_p + n\right)_{N}\right] \hat{x}_{\left(\operatorname{loc}_p+n\right)_{N}}\! -\! H_p\left[n,\left(\operatorname{loc}_p+n\right)_{N}\right] x_{\left(\operatorname{loc}_p+n\right)_{N}}$, $p\! =\! 1,2,\ldots,P$, $n\! =\! 0, 1, \ldots, N-1$.\footnote{For the convenience of typesetting, we use $\mathrm{loc}_p+n$ to represent $\left(\operatorname{loc}_p+n\right)_{N}$ throughout the following text.}
		
	Given constants $\ell_{n,m}$, $\hat H_p\left[n,\operatorname{loc}_p + n\right]$ can be expressed as $\hat H_p[n,\operatorname{loc}_p+n]=\ell_{n,m}H_p[n,\operatorname{loc}_p+n]$. According to (\ref{H_p}), the entries of $\boldsymbol{\gamma}_p$ are given by\setcounter{equation}{71}
\begin{align}\label{eqAp5} 
	\bm{\gamma}_p[n] =& H_p[n,\mathrm{loc}_p+n] {\rho}_{n,m} \nonumber \\
	=& e^{\textsf{j} 2 \pi \left[ c_{2,\operatorname{loc}_p+n}\left(\operatorname{loc}_p+n\right)^2 - c_{2, n}n^2 \right]} \nonumber \\
	& \times e^{\textsf{j} \frac{2 \pi}{N} [-( \operatorname{loc}_p+n) + c_1 d_p^2 ]} {\rho}_{n,m},
\end{align}       
where ${\rho}_{n,m} = \left(x_{\mathrm{loc}_p+n}-\ell_{m,n}\hat{x}_{\mathrm{loc}_p+n}\right)$. Now assume that a set of numbers $\{\beta_p\}_{p=1}^P$ that are not all zero satisfy
\begin{equation}\label{correlated} 
	\sum_{p=1}^{P}\beta_p \bm{\gamma}_p=\mathbf{0}.
\end{equation} 
Without loss of generality, $\beta_b$ is assumed to be the non-zero number, $b\! \in\! \{1, 2, \ldots, P\}$. The relationship for any index $n$ can be obtained as
\begin{equation}\label{sum_n} 
	\sum_{p=1}^{P}\beta_p  \bm{\gamma}_p[n] = \sum_{p=1}^{P}\frac{\beta_p}{\beta_b}  \bm{\gamma}_p[n] =0,\quad\forall n. 
\end{equation}			
From (\ref{sum_n}), it is not difficult to obtain
\begin{align}\label{rho_2} 
	& {\rho}_{n,b} = - \!\!\! \sum_{p=1, p \neq b}^{P} \frac{\beta_p}{\beta_b} \frac{\boldsymbol{\gamma}_p[n]}{H_p[n,\mathrm{loc}_p+n]} \nonumber \\
	& = -e^{-\textsf{j} 2\pi c_{2,\mathrm{loc}_b}\mathrm{loc}_b^2} e^{\textsf{j} \frac{2\pi}{N}(N c_1(-d_b^2+\mathrm{loc}_b d_b))} \nonumber \\
	& \hspace{2mm} \times \!\!\! \sum_{p=1, p \neq b}^P \!\!\! e^{\textsf{j} 2\pi\big( c_{2,\mathrm{loc}_p+n}{(\mathrm{loc}_p+n)}^2 + c_1 d_p^2-\frac{\mathrm{loc}_p d_p}{N}\big)} \frac{\beta_p}{\beta_b} {\rho}_{n,m}.
\end{align}
Since ${\rho}_{0,b}$ is a signal error term unrelated to $c_{2,n}$, the phase of $\frac{\beta_p}{\beta_b}(m \neq b)$ must contain $2\pi\big(c_{2,\mathrm{loc}_b}\mathrm{loc}_b^2-c_{2,\mathrm{loc}_p+n}(\mathrm{loc}_p+n)^2\big)$ to eliminate the influence of irrational numbers. On the other hand, if (\ref{rho_2}) holds, there exists another non-zero $\beta_a$ ($a \ne b$). Similar to the derivation process of (\ref{rho_2}), the phase of $\frac{\beta_p}{\beta_a}(m \neq a)$ must contain $2\pi\big(c_{2,\mathrm{loc}_a}\mathrm{loc}_a^2-c_{2,\mathrm{loc}_p+n}(\mathrm{loc}_p+n)^2\big)$. Hence, the relationship between $\frac{\beta_a}{\beta_b}$ and $\frac{\beta_b}{\beta_a}$ can be obtained as $\frac{\beta_a}{\beta_b} \cdot \frac{\beta_b}{\beta_a} = 1$, i.e.,
\begin{align}\label{eqAf} 
	& e^{\textsf{j}2\pi c_{2,\mathrm{loc}_b}\mathrm{loc}_b^2} e^{-\textsf{j} 2\pi c_{2,\mathrm{loc}_a + n}(\mathrm{loc}_a + n)^2} \nonumber \\
  & \hspace{4mm} \times e^{\textsf{j}2\pi c_{2,\mathrm{loc}_a}\mathrm{loc}_a^2} e^{-\textsf{j} 2\pi c_{2,\mathrm{loc}_a + n}(\mathrm{loc}_a + n)^2} \vartheta = 1, ~ \forall n,
\end{align}
where $\vartheta$ is a complex number whose phase does not contain $c_{2,n}$. Given that all the $c_{2,n}$ are irrational numbers, the phase on the left-hand side of (\ref{eqAf}) cannot be an integer multiplying $2 \pi$.
  This implies that the imaginary part is not zero, and (\ref{eqAf}) does not hold. Consequently, the assumption of (\ref{correlated}) is invalid, which means that the column vectors of the matrix $\mathbf{\Phi}(\delta)$ are linearly independent, i.e., the rank of $\mathbf{\Phi}(\delta)$ is $P$. Combining this result with the definition of diversity order (\ref{order}), it is concluded that the AFDM-PIM can achieve the full diversity order.  
\end{proof}

\subsection{Derivation of $O_{k,j}(\mathcal{P}_c)$ (\ref{Okj})}\label{ApB}

	Substituting $\mathbf{\Phi}_j^r(\mathbf{x})\! =\! [\mathbf{H}_{1}\mathbf{x}, \ldots, \mathbf{H}_{P}\mathbf{x}]$ and $\mathbf{\Phi}_k^r(\mathbf{x}^{\prime})\! =\! [\mathbf{H}_{1}^{\prime}\mathbf{x}^{\prime}, \ldots, \mathbf{H}_{P}^{\prime}\mathbf{x}^{\prime}]$ into (\ref{Okj_o}) leads to
\begin{equation}\label{problem3} 
	O_{k,j}(\mathcal{P}_c) = \sum_{\mathbf{x}^{\prime}\!, \mathbf{x}} \sum_{r=1}^{\mathcal{R}} \sum_{p=1}^P\left\|\mathbf{H}_p^{\prime}\mathbf{x}^{\prime}-\mathbf{H}_p\mathbf{x}\right\|_\mathrm{2} .
\end{equation}			
According to (\ref{H_p}), the elements of $\mathbf{H}_p\mathbf{x}$ can be expressed as       
\begin{align}\label{MatrixHx64}  
	\mathbf{H}_p \mathbf{x}[n] =& H_p\left[n,\operatorname{loc}_p + n \right] x_{\mathrm{loc}_p + n} \nonumber \\
	=& e^{\textsf{j} 2 \pi \big( c_{2,\operatorname{loc}_p+n} (\operatorname{loc}_p+n)^2 - c_{2, n}n^2 d_p \big)} 	\nonumber \\	
	& \times e^{\textsf{j} \frac{2 \pi}{N} \big(-( \operatorname{loc}_p+n) + c_1 d_p^2\big)} x_{\mathrm{loc}_p + n} . 
\end{align}
Moreover, the elements of $\mathbf{H}_p^{\prime}\mathbf{x}^{\prime}-\mathbf{H}_p\mathbf{x}$ can be expressed as
\begin{align}\label{MatrixHx-}  
	& \left(\mathbf{H}_p^{\prime}\mathbf{x}^{\prime} - \mathbf{H}_p\mathbf{x}\right)[n] = H_p^{\prime}\left[n,\operatorname{loc}_p+n\right] x^{\prime}_{\operatorname{loc}_p+n} \nonumber \\
	&\hspace*{8mm} - H_p\left[n,\operatorname{loc}_p+n\right] x_{\operatorname{loc}_p+n} \nonumber \\
	&\hspace*{5mm} = \bigg(e^{\textsf{j} \frac{2 \pi}{N}\big(N c_{2,\operatorname{loc}_p+n}^{\prime} (\operatorname{loc}_p+n)^2-{N} c_{2,n}^{\prime} n^2\big)} x^{\prime}_{\operatorname{loc}_p+n} \nonumber \\
	&\hspace*{8mm} -e^{\textsf{j} \frac{2 \pi}{{N}}\big({N} c_{2,\operatorname{loc}_p+n} (\operatorname{loc}_p+n)^2-{N} c_{2,n} n^2\big)} x_{\operatorname{loc}_p+n}\bigg) \nonumber \\
	&\hspace{8mm} \times e^{\textsf{j} \frac{2 \pi}{{N}}\big(-(\operatorname{loc}_p+n) d_p+{N} c_1 d_p^2\big)} .
\end{align}     
Therefore, the norm of $\left\|(\mathbf{\Phi}_k^r (\mathbf{x}^{\prime})-\mathbf{\Phi}_j^r(\mathbf{x}))\right\|_{\mathrm{F}}^2$ in (\ref{Okj_o}) can be calculated as (\ref{F-norm}), shown at the bottom of last page, where $x_{\mathrm{loc}_p+n}^{\prime}$ and $x_{\mathrm{loc}_p+n}$ are the elements of $\mathbf{x}^{\prime}$ and $\mathbf{x}$, respectively. We define $x_{\mathrm{loc}_p+n} = r_{p,n} e^{j\vartheta_{p,n}}$ and $x_{\mathrm{loc}_p+n}^{\prime} =  r_{p,n}^{\prime} e^{j\vartheta_{p,n}^{\prime}}$.

	Utilizing the formula \setcounter{equation}{80}
\begin{equation} 
	\mid a-b\mid^2 = \mid a\mid^2-2\Re(a b^*)+\mid b\mid^2,
\end{equation}
where $a$ and $b$ are any complex numbers, (\ref{F-norm}) can be expressed as
\begin{align}\label{F-norm68} 
	& \left\|(\mathbf{\Phi}_k^r (\mathbf{x}^{\prime})\! -\! \mathbf{\Phi}_j^r(\mathbf{x}))\right\|_{\mathrm{F}}^2 
	  \! =\! \sum_{p=1}^P \sum_{n=1}^{N} \left|e^{\textsf{j} \theta_n^{\prime}} x^{\prime}_{\operatorname{loc}_p+n}\! -\! e^{\textsf{j} \theta_n} x_{\operatorname{loc}_p+n} \right|^2 \nonumber \\
	&\hspace*{3mm} =\! \sum_{p=1}^P \sum_{n=1}^{N}\! \! \left(\left|x_{\mathrm{loc}_p+n}\right|^2 + \left|x_{\mathrm{loc}_p+n}^{\prime}\right|^2 \! \right. \nonumber \\
	&\hspace*{20mm}\left. - 2\mathfrak{R}\left(x_{\operatorname{loc}_p+n}^{\prime} \big(x_{\operatorname{loc}_p+n}\big)^* e^{\textsf{j}(\theta_n^{\prime}-\theta_n)}\right)\! \right)\! \nonumber \\
	&\hspace*{3mm}  =\! \sum_{p=1}^P \sum_{n=1}^{N}\! \! \left(r_{p,n}^2 + {r_{p,n}^{\prime \; 2}} \! \right. \nonumber \\
	&\hspace*{20mm}  \left. - 2 r_{p,n}r_{p,n}^{\prime}  \mathfrak{R}\left( e^{\textsf{j}\left(\theta_n^{\prime}-\theta_n + \vartheta_{p,n}^{\prime} - \vartheta_{p,n}\right)}\right)\! \right)\! ,\!
\end{align}
where $\theta_n^{\prime}$ and $\theta_n$ are given by (\ref{theta_define}).

	According to Euler's formula, we further simplify (\ref{F-norm68}) as
\begin{align}\label{FFFF} 
	&\left\|(\mathbf{\Phi}_k^r (\mathbf{x}^{\prime})\! -\! \mathbf{\Phi}_j^r(\mathbf{x}))\right\|_{\mathrm{F}}^2\! \hspace*{3mm}  = \sum_{p=1}^P \sum_{n=1}^{N}  F_{p,n} ,
\end{align} 				
where $F_{p,n}$ is given by (\ref{eq-Fn}). Substituting (\ref{FFFF}) into (\ref{Okj_o}) leads to $O_{k,j}(\mathcal{P}_c)$ of (\ref{Okj}).

\subsection{Relationship between Speed and Normalized Doppler Shift}\label{calc_doppler}

Let the maximum speed of the mobile station be $v_e$, the carrier frequency be $f_c$, and the speed of light be denoted by $c$. Then the maximum Doppler frequency $\nu_{\max}$ is given by
\begin{equation} 
	\nu_{\max} \;=\; \frac{v_e \, f_c}{c}.
	\label{eq:max_doppler}
\end{equation}
As established in Section~\ref{S2}, the maximum normalized Doppler shift is obtained through the process of normalization as 
\begin{align} 
\alpha_{\max}\! & = \! N \, T \, \nu_{\max} = \frac{\nu_{\max}}{f_s} = \frac{v_{e} \, f_c}{c \, f_s}.
\end{align}
where $f_s$ represents the subcarrier spacing.

	
	
\footnotesize
\bibliographystyle{IEEEtran}   

\begin{IEEEbiography}[{\includegraphics[width=1in,height=1.25in,clip,keepaspectratio]{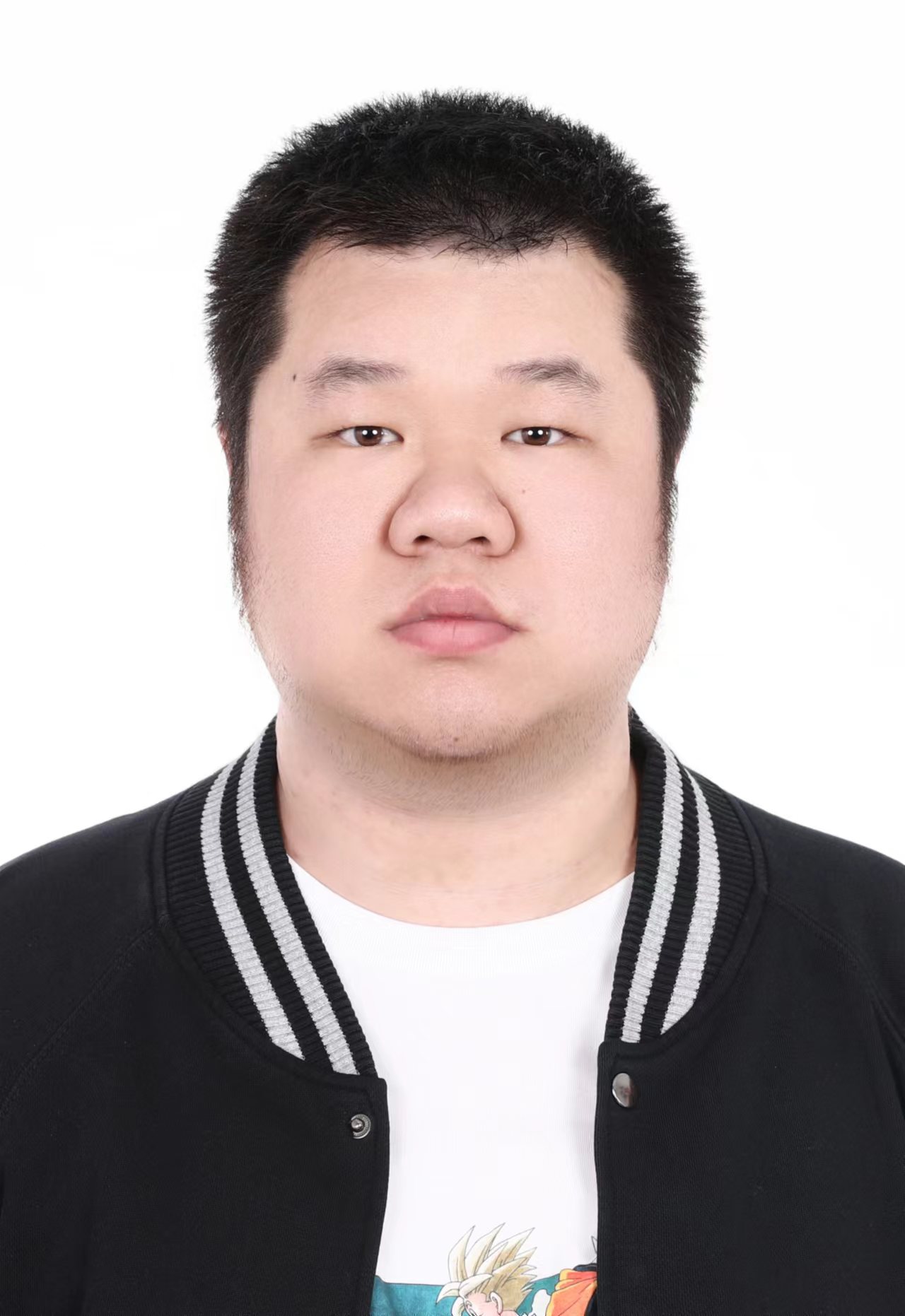}}]{Guangyao Liu}
	
	received the B.S. degree in electronics information engineering from Dalian University of Technology, Dalian, China, in 2019. He is currently working toward the Ph.D. degree with the School of Electronic and Information Engineering, Beihang University. His current research interests include modulation and signal processing for wireless communication and emerging modulation techniques for 6G networks.
	
\end{IEEEbiography}

\begin{IEEEbiography}[{\includegraphics[width=1in,height=1.25in,clip,keepaspectratio]{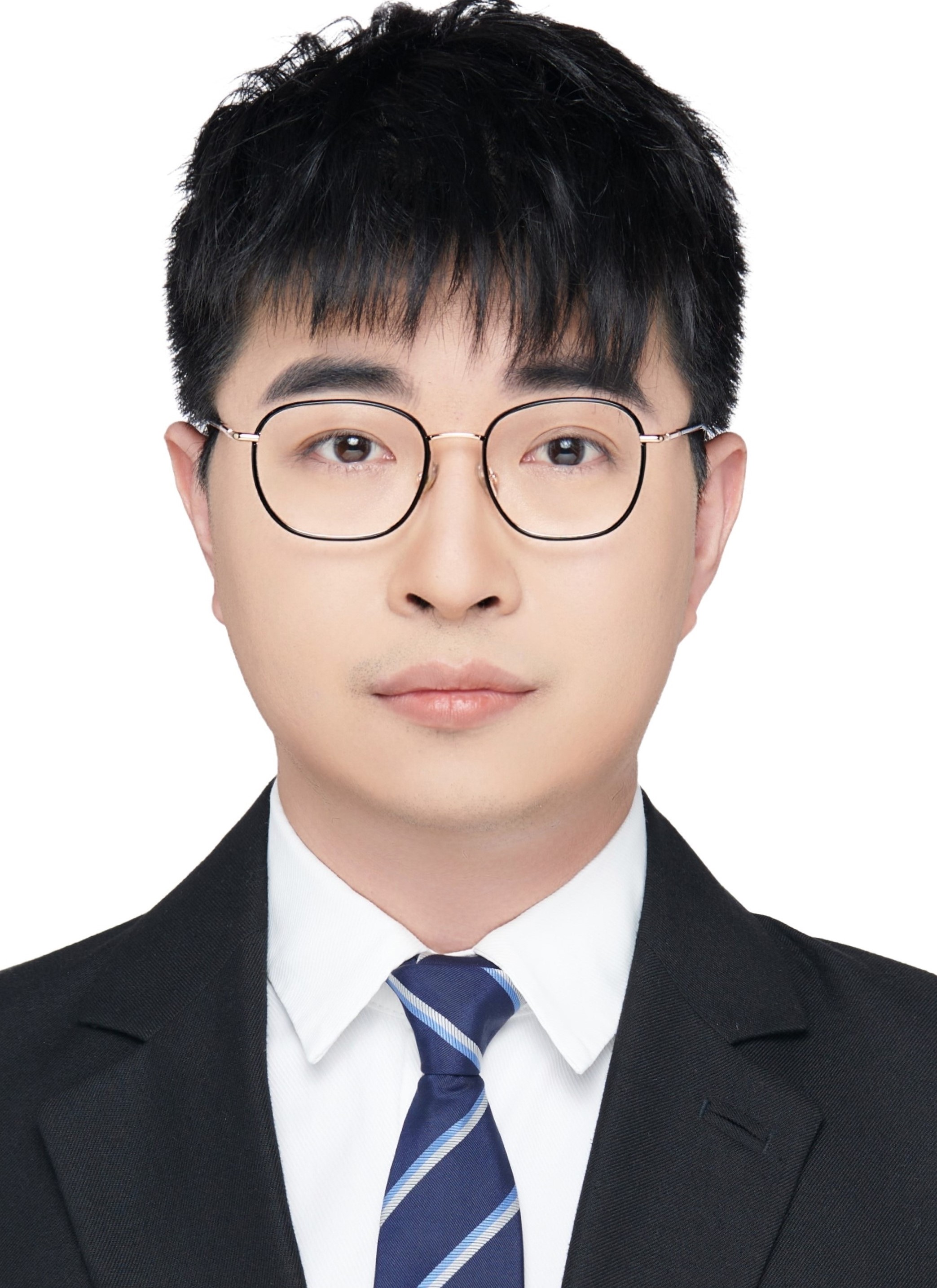}}]{Tianqi Mao}
		(Member, IEEE) received his B.S., M.S. (Hons.), and Ph.D (Hons.) degrees from Tsinghua University in 2015, 2018, and Jan. 2022. He is currently working as an Associate Professor with the Advanced Research Institute of Multidisciplinary Sciences, Beijing Institute of Technology, Beijing, China. He has authored over 50 journal and conference papers, including 3 Highly Cited Papers of ESI (as the first author). His research interests include modulation, waveform design and signal processing for wireless communications, integrated sensing and communication, terahertz communications, and visible light communications. He was a recipient of the Young Elite Scientists Sponsorship Program by China Association for Science and Technology, the Science and Technology Award (Second Prize) of China Institute of Communications and the Outstanding Ph.D. Graduate of Beijing City. He was the winner of IEEE IWCMC 2024 Best Paper Award. He was the organizer of industry panel in IEEE WCNC 2024, and a Tutorial Lecturer in IEEE ICC 2025, IEEE VTC-fall 2024 and IEEE IWCMC 2024. He is currently the Lead Guest Editor of \textsc{IEEE NETWORK} and an Associate Editor of \textsc{IEEE COMMUNICATIONS LETTERS} and \textsc{IEEE OPEN JOURNAL OF VEHICULAR TECHNOLOGY}, and was a Guest Editor of \textsc{IEEE Transactions on Molecular, Biological, and Multi-Scale Communications} and \textsc{IEEE OPEN JOURNAL OF THE COMMUNICATIONS SOCIETY} (two Special Issues). He was also the Exemplary Reviewer of \textsc{IEEE TRANSACTIONS ON COMMUNICATIONS} and \textsc{COMMUNICATIONS LETTERS}
\end{IEEEbiography}
\vspace{-1cm}

\begin{IEEEbiography}[{\includegraphics[width=1in,height=1.25in,clip,keepaspectratio]{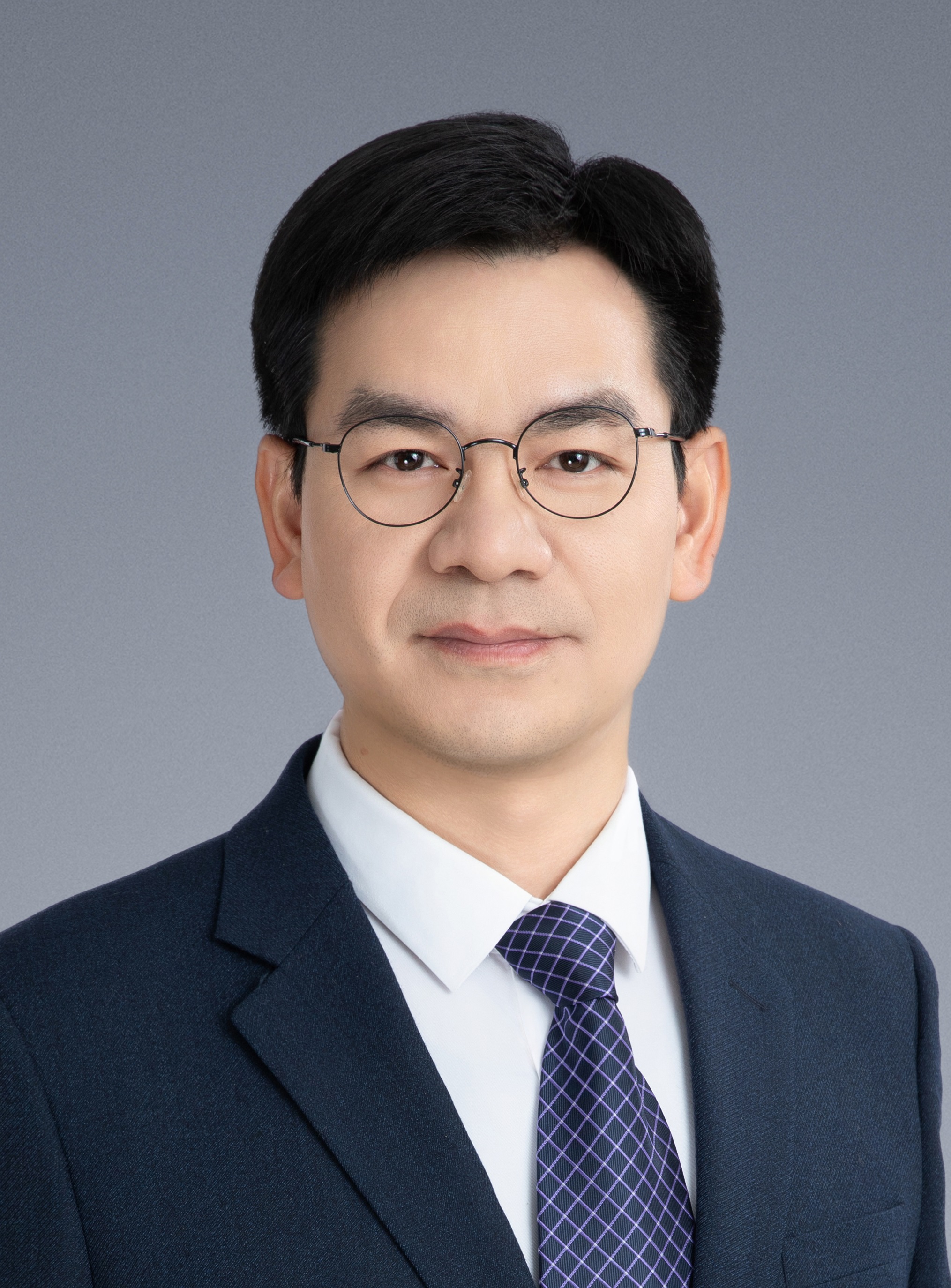}}]{Zhenyu Xiao}
	(Senior Member, IEEE)  received the B.E. degree with the Department of Electronics and Information Engineering, Huazhong University of Science and Technology, Wuhan, China, in 2006, and the Ph.D. degree with the Department of Electronic Engineering, Tsinghua University, Beijing, China, in 2011. From 2011 to 2013, he was a Postdoctorial position with the Department of Electronic Engineering, Tsinghua University. He was with the School of Electronic and Information Engineering, Beihang University, Beijing, as a Lecturer, from 2013 to 2016, and an Associate Professor, from 2016 to 2020, where he is currently a Full Professor. He has visited the University of Delaware from 2012 to 2013 and the Imperial College London from 2015 to 2016. He is currently an Associate Editor for IEEE TRANSACTIONS ON COGNITIVE COMMUNICATIONS AND NETWORKING AND CHINA COMMUNICATIONS. He has also been lead guest Editors of special issues for IEEE JOURNAL ON SELECTED AREAS IN COMMUNICATIONS, China Communications. He was the recipient of the 2017 Best Reviewer Award of IEEE TRANSACTIONS ON WIRELESS COMMUNICATIONS, 2019 Exemplary Reviewer Award of IEEE World Conference on Lung Cancer, and the Fourth China Publishing Government Award, China Youth Science and Technology Award, Second Prize of National Technological Invention, First Prize of Natural Science of China Electronics Society, and the First Prize of Technical Invention of China Society of Aeronautics and Astronautics. He is an active Researcher with broad interests on millimeter wave communications, AAV/satellite communications and networking. He was elected as one of highly cited Chinese researchers since 2020.
\end{IEEEbiography}
\vspace{-1cm}

\begin{IEEEbiography}[{\includegraphics[width=1in,height=1.25in,clip,keepaspectratio]{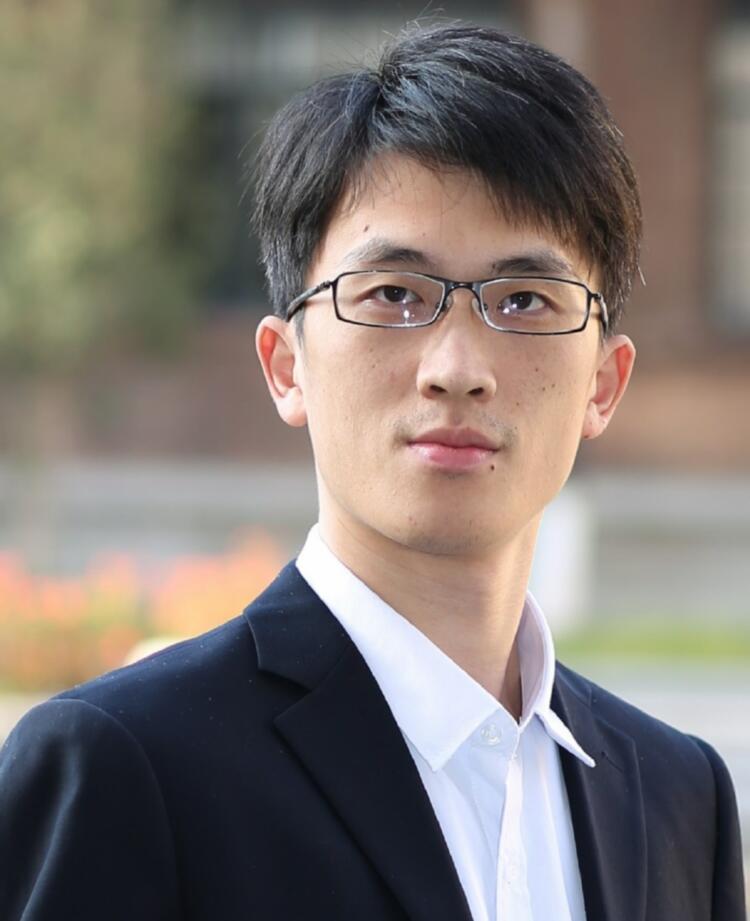}}]{Miaowen Wen}
	
	(Senior Member, IEEE) received the Ph.D. degree in signal and information processing from Peking University, Beijing, China, in 2014. From 2019 to 2021, he was a Hong Kong Scholar with the Department of Electrical and Electronic Engineering, The University of Hong Kong, Hong Kong. He is currently a Professor with the South China University of Technology, Guangzhou, China. He has authored or coauthored two books and more than 200 journal articles. His research interests include a variety of topics in the areas of wireless and molecular communications. He was a recipient of the IEEE Communications Society Asia-Pacific Outstanding Young Researcher Award in 2020. He was the Winner in the Data Bakeoff Competition (Molecular MIMO) from the 2019 IEEE Communication Theory Workshop, Selfoss, Iceland. He served as an Editor for IEEE Transactions on Communications (2019-2024). Currently, he is serving as an Editor/Senior Editor for IEEE Transactions on Wireless Communications, IEEE Transactions on Molecular, Biological, and Multi-scale Communications, and IEEE Communications Letters.
	
\end{IEEEbiography}
\vspace{-.5cm}

\begin{IEEEbiography}
	[{\includegraphics[width=1in,height=1.25in,clip,keepaspectratio]{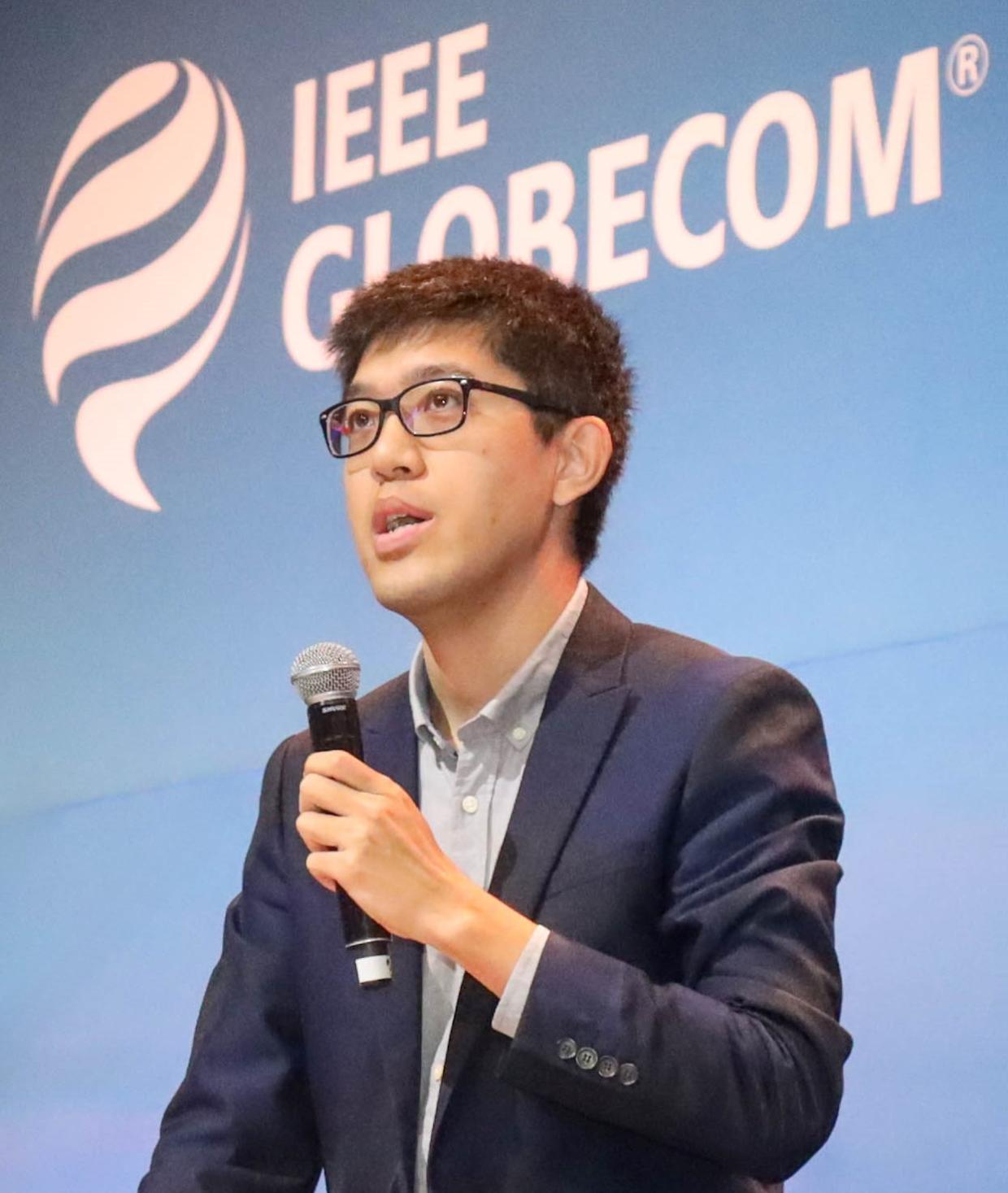}}]{Ruiqi (Richie) Liu} (Senior Member, IEEE) received the B.S. and M.S. degree (with honors) in electronic engineering from the Department of Electronic Engineering, Tsinghua University in 2016 and 2019 respectively. He is now a master researcher in the wireless and computing research institute of ZTE Corporation, responsible for long-term research as well as standardization. His main research interests include reconfigurable intelligent surfaces, integrated sensing and communication and wireless positioning. He currently serves as the Vice Chair of ISG RIS in the ETSI. He is involved in organizing committees of international conferences and is invited to give multiple talks, including the keynote speech at IEEE Globecom 2024. He takes multiple leadership roles in the committees and boards in IEEE ComSoc and VTS, including the voting member of the ComSoc industry communities board. He served as the Deputy Editor-in-Chief of IET Quantum Communication, the Associate Editor for IEEE Communications Letters, the Editor of ITU Journal of Future and Evolving Technologies (ITU J-FET) and Guest Editor or Lead Guest Editor for a series of special issues. His recent awards include the 2022 SPCC-TC Outstanding Service Award and the Beijing Science and Technology Invention Award (Second Prize, 2023).
\end{IEEEbiography}

\begin{IEEEbiography}[{\includegraphics[width=1in,height=1.25in,clip,keepaspectratio]{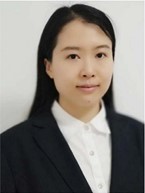}}]{Jingjing Zhao}
	
	(Member, IEEE) received the B.S. degree from the Beijing University of Posts and Telecommunications, Beijing, China, in 2013, and the Ph.D. degree from the Queen Mary University of London, London, U.K., in 2017. From 2017 to 2018, she was a Post-Doctoral Research Fellow with the Department of Informatics, King's College London, London, U.K. From 2018 to 2020, she was a researcher in Amazon, London, U.K. Currently, she is an associate professor with the Beihang University, Beijing, China. Her current research interests include aeronautical broadband communications, RISs aided communications, integrated sensing and communication, and machine learning.  
\end{IEEEbiography}

\begin{IEEEbiography}[{\includegraphics[width=1in,height=1.25in,clip,keepaspectratio]{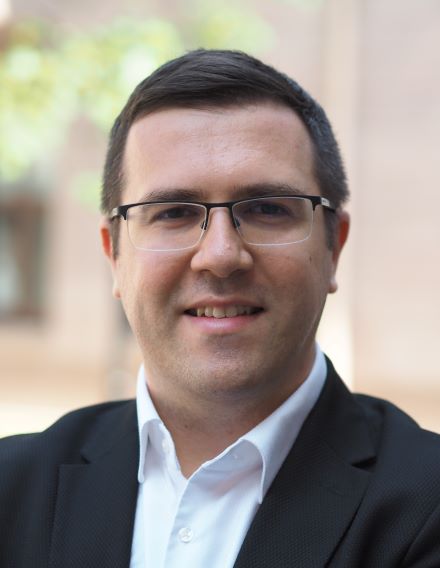}}]{Ertugrul Basar}
	
	(Fellow, IEEE) received the B.S. degree (with High Honors) from Istanbul University, Turkey, in 2007 and the M.Sc. and Ph.D. degrees from Istanbul Technical University, Turkey, in 2009 and 2013, respectively. He is a Professor of Wireless Systems at the Department of Electrical Engineering, Tampere University, Finland. Before joining Tampere University, he held positions at Koç University from 2018 to 2025 and at Istanbul Technical University from 2009 to 2018. He also had visiting positions at Princeton University, Princeton, NJ, USA, as a Visiting Research Collaborator from 2011 to 2022, and at Ruhr University Bochum, Bochum, Germany, as a Mercator Fellow in 2022. Prof. Basar’s primary research interests include 6G and beyond wireless communication systems, multi-antenna systems, index modulation, reconfigurable intelligent surfaces, waveform design, zero-power and thermal noise communications, software-defined radio, physical layer security, quantum key distribution systems, and signal processing/deep learning for communications. He is an inventor of 17 pending/granted patents on future wireless technologies.
	
\end{IEEEbiography}

\begin{IEEEbiography}[{\includegraphics[width=1in,height=1.25in,clip,keepaspectratio]{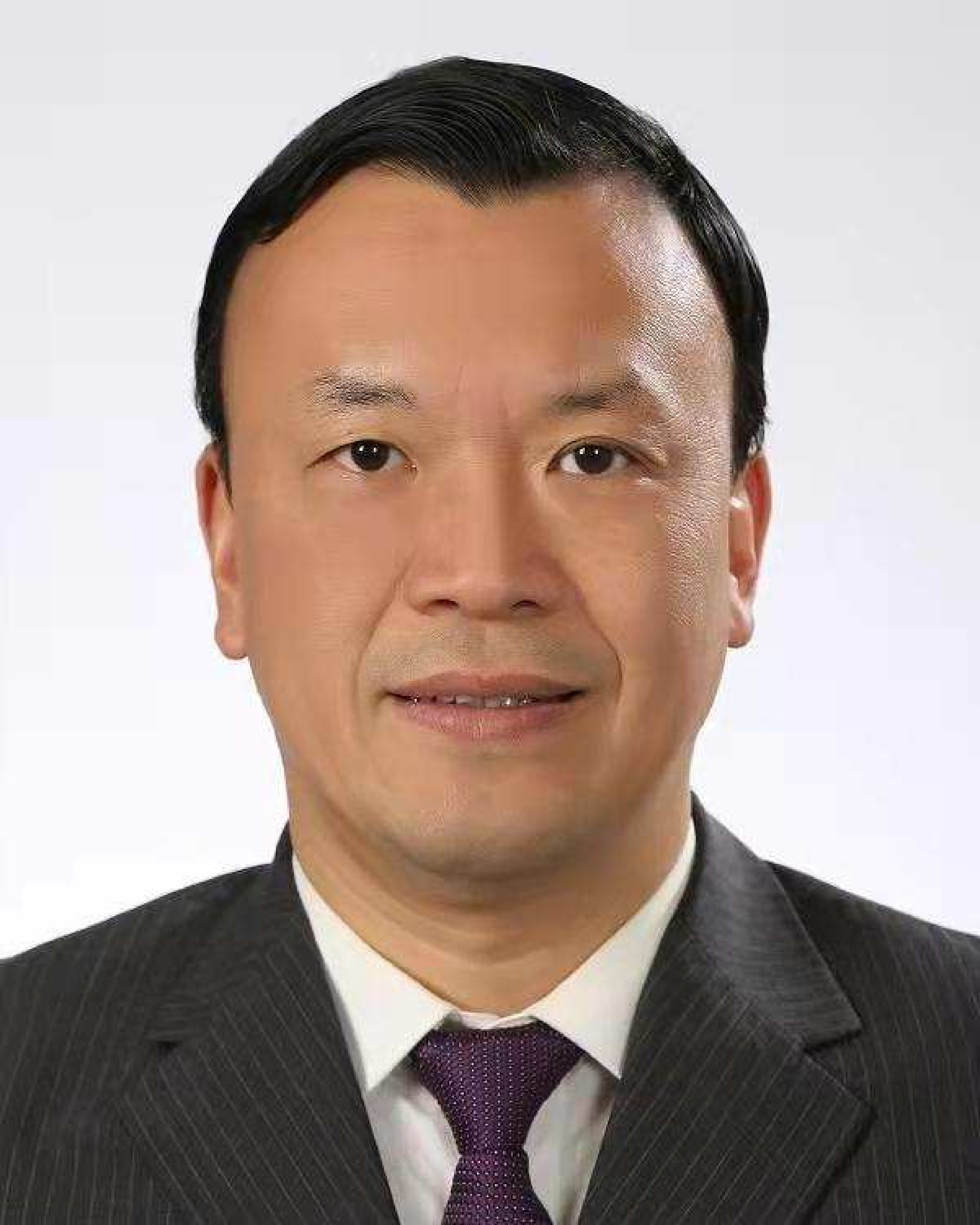}}]{Zhaocheng Wang} (Fellow, IEEE) received the B.S., M.S., and Ph.D. degrees from Tsinghua University in 1991, 1993, and 1996, respectively.   From 1996 to 1997, he was a Post-Doctoral Fellow with Nanyang Technological University, Singapore. From 1997 to 1999, he was a Research Engineer/a Senior Engineer with OKI Techno Centre (Singapore) Pte. Ltd., Singapore. From 1999 to 2009, he was a Senior Engineer/a Principal Engineer with Sony Deutschland GmbH, Germany. Since 2009, he has been a Professor with the Department of Electronic Engineering, Tsinghua University, where he is currently the Director of the Broadband Communication Key Laboratory, Beijing National Research Center for Information Science and Technology (BNRist). He has authored or coauthored two books, which have been selected by IEEE Press Series on Digital and Mobile Communication (Wiley-IEEE Press). He has also authored/coauthored more than 200 peer-reviewed journal articles. He holds 60 U.S./EU granted patents (23 of them as the first inventor). His research interests include wireless communications, millimeter wave communications, and optical wireless communications.  Prof. Wang is a fellow of the Institution of Engineering and Technology. He was a recipient of the ICC2013 Best Paper Award, the OECC2015 Best Student Paper Award, the 2016 IEEE Scott Helt Memorial Award, the 2016 IET Premium Award, the 2016 National Award for Science and Technology Progress (First Prize), the ICC2017 Best Paper Award, the 2018 IEEE ComSoc Asia–Pacific Outstanding Paper Award, and the 2020 IEEE ComSoc Leonard G. Abraham Prize.
\end{IEEEbiography}

\vspace{-3cm}

\begin{IEEEbiography}[{\includegraphics[width=1in,height=1.25in,clip,keepaspectratio]{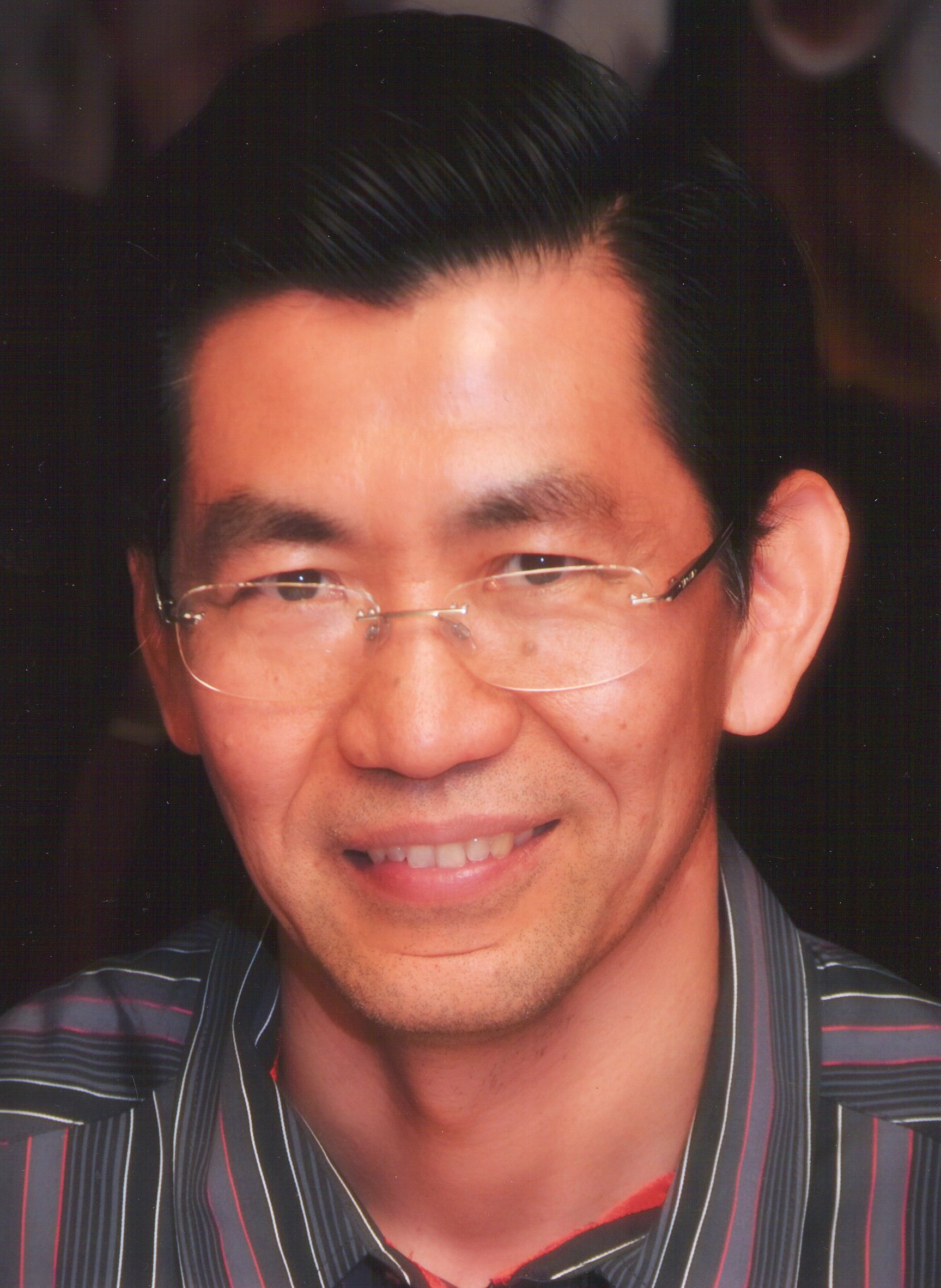}}]{Sheng Chen}
	
	(Life Fellow, IEEE) received his BEng degree from the East China Petroleum Institute, Dongying, China, in January 1982, and his PhD degree from the City University, London, in September 1986, both in control engineering. In August 2005, he was awarded the higher doctoral degree, Doctor of Sciences (DSc), from the University of Southampton, Southampton, UK. From 1986 to 1999, He held research and academic appointments at the Universities of Sheffield, Edinburgh and Portsmouth, all in UK. Since 1999, he has been with the School of Electronics and Computer Science, the University of Southampton, UK, where he holds the post of Professor in Intelligent Systems and Signal Processing. Dr Chen's research interests include adaptive signal processing, wireless communications, modeling and identification of nonlinear systems, neural network and machine learning, evolutionary computation methods and optimization. He has published over 700 research papers. Professor Chen has 21,000+ Web of Science citations with h-index 63 and 40,000+ Google Scholar citations with h-index 86. Dr. Chen is a Fellow of the United Kingdom Royal Academy of Engineering, a Fellow of Asia-Pacific Artificial Intelligence Association and a Fellow of IET. He is one of the 200 original ISI highly cited researchers in engineering (March 2004).
	
\end{IEEEbiography}

\end{document}